\newtheorem{thm}{Theorem}
\newtheorem{defn}{Definition}
\newtheorem{lemma}{Lemma}
\newtheorem{rk}{Remark}
\newtheorem{cor}{Corollary}
\newtheorem{ex}{Example}
\numberwithin{equation}{section} \setcounter{tocdepth}{1}
\newcommand{\mypp}{\mbox{$\;\!$}}
\newcommand{\myn}{\mbox{$\;\!\!$}}
\newcommand{\bea}{\begin{eqnarray}}
\newcommand{\eea}{\end{eqnarray}}
\newcommand{\Z}{\mathbb{Z}}
\newcommand{\Q}{\mathbb{Q}}
\def\Z{\mathbb{Z}}
\begin{document}
\title[$p$-adic boundary laws and Markov chains]{$p$-adic boundary laws and Markov chains on trees}

\author{A. Le Ny, L. Liao,  U. A. Rozikov}

\address{A. \ Le Ny \\ Universit\'e Paris-Est, Laboratoire d'Analyse et de Math\'ematiques Appliqu\'ees, LAMA UMR CNRS 8050, UPEC, 91 Avenue du G\'en\'eral de Gaulle, 94010 Cr\'eteil cedex, France.}
\email {arnaud.le-ny@u-pec.fr}

\address{L. \ Liao\\ Universit\'e Paris-Est, Laboratoire d'Analyse et de Math\'ematiques Appliqu\'ees, LAMA UMR CNRS 8050, UPEC, 91 Avenue du G\'en\'eral de Gaulle, 94010 Cr\'eteil cedex, France.}
\email {lingmin.liao@u-pec.fr}

 \address{U.\ A.\ Rozikov\\ Institute of mathematics,
81, Mirzo Ulug'bek str., 100170, Tashkent, Uzbekistan.}
\email {rozikovu@yandex.ru}

\begin{abstract} In this paper we consider $q$-state potential on general infinite trees with a nearest-neighbor $p$-adic
interactions given by a stochastic matrix. {We show the uniqueness of the associated Markov chain ({\em splitting Gibbs measures}) under some sufficient conditions on the stochastic matrix.}
Moreover, we find a family of stochastic matrices for which there are at least two $p$-adic
Markov chains on an infinite tree (in particular, on a Cayley tree).
When the $p$-adic norm of $q$ is greater ({{\em resp.}} less) than the norm of any element of the stochastic matrix
then it is proved that the $p$-adic Markov chain is bounded ({{\em resp.}} is not bounded).
Our method {uses} a classical boundary law argument carefully adapted from
the real case to the $p$-adic case, by a systematic use of some
nice peculiarities of the ultrametric ($p$-adic) norms.

\end{abstract}
\maketitle

{\bf Mathematics Subject Classifications (2010).} 46S10, 82B26, 12J12 (primary);
60K35 (secondary)

{\bf{Key words.}} Cayley trees, boundary laws, Gibbs measures, translation
invariant measures, $p$-adic numbers, $p$-adic probability measures, $p$-adic Markov chain, non-Archimedean probability.

\section{Introduction} \label{sec:intro}

In this paper we develop a boundary law argument to study $p$-adic Markov chains on general trees.
In the real case Markov chains on trees are particular cases of Gibbs measures
corresponding to a Hamiltonian with nearest-neighbor interactions.
In the theory of Gibbs measures on trees (see \cite[Chapter 12]{Ge} and \cite{Ro})
the main problem is to describe the set of limiting Gibbs measures corresponding to a given Hamiltonian.
A complete analysis
of this set is often a difficult problem, this is even not completely described for
the Ising model (see \cite{GH, GMRS, GR, RR}  for some recent results).

 Parallel to the real valued Gibbs measures, the $p$-adic Gibbs measures are studied using
the $p$-adic mathematical physics
in \cite{7, 23, 24, RK, 48}. A $p$-adic distribution is an analogue of ordinary distributions
that takes values in a ring of  $p$-adic numbers \cite{22}, \cite{23}. 
 Analogically to a measure on a measurable space, a $p$-adic measure is a special case of a $p$-adic distribution. A $p$-adic distribution taking values in a normed space is called {\em{a $p$-adic measure}} if the values on compact open subsets are bounded.

It is known  that some $p$-adic models in physics cannot be described using ordinary Kolmogorov's probability
theory \cite{24, 29, 34, 48}. In \cite{28} the $p$-adic probability theory was developed using
the theory of non-Archimedean measures \cite{40}.
In \cite{GRR,19, 36, 37, 38, RT} various models of statistical physics
in the context of $p$-adic fields are studied.

In probability theory Kolmogorov's extension theorem (see, {\em e.g.}, \cite[Chapter~II,
\S\mypp3, Theorem~4, page~167]{44}), says that a compatibility
condition of a sequence of probability measures ensures that there exists a unique (limit) measure.
This theorem is used to introduce (real-valued) Markov chains on trees (see \cite[Chapter 12]{Ge}) by notion of a boundary law.
 A $p$-adic analogue of Kolmogorov's theorem was proved in
\cite{16}. Such a $p$-adic Kolmogorov theorem allows us to construct wide classes of
stochastic processes and to develop statistical
mechanics in the context of $p$-adic theory \cite{33}-\cite{38}.

In the present paper we introduce $p$-adic Markov chains on general infinite trees.
Such chains are constructed by $p$-adic boundary laws (for the real case see \cite[Chapter 12]{Ge}).  We also discuss the uniqueness  and boundedness of the $p$-adic Markov chain. The boundedness of the  $p$-adic measure is needed  to integrate $p$-adic valued functions  \cite{22, 23, 41}, and also to consider conditional expectations \cite{22,33}. Note that  $p$-adic measures are also useful in $p$-adic $L$-functions following
the works of B. Mazur (see  \cite{GG, 29}  for details).

 The paper is organized as follows. Section 2 presents definitions and known results.
 Section 3 is devoted to an introduction of $p$-adic Markov chains through boundary laws. Section 4 (resp. Section 5) is devoted
 to  finding a sufficient condition of  the uniqueness ({\em resp.} non-uniqueness) of $p$-adic Markov chain. In Section 6 we give some conditions ensuring
 that the $p$-adic Markov chain is ({\em resp.} not) bounded.

\section{Preliminaries}

\subsection{$p$-adic numbers and measures.} Let $\Q$ be the field of rational numbers. For a fixed prime number $p$, every rational number $x\ne 0$ can be represented
in the form $x = p^r{n\over m}$, where $r, n\in \Z$, $m$ is a positive integer, and $n$ and $m$ are relatively prime with $p$: $(p, n) = 1$, $(p, m) = 1$. The $p$-adic norm of $x$ is given by
$$|x|_p=\left\{\begin{array}{ll}
p^{-r}\ \ \mbox{for} \ \ x\ne 0\\
0\ \ \mbox{for} \ \ x = 0.
\end{array}\right.
$$
This norm is non-Archimedean  and satisfies the so-called {\em strong triangle inequality}
$$|x+y|_p\leq \max\{|x|_p,|y|_p\}.$$

We will often use the following fact:
\begin{equation}\label{dg}
{\rm If}\ \ |x|_p\ne |y|_p \ \ {\rm then} \ \ |x+y|_p=\max\{|x|_p,|y|_p\}.
\end{equation}

The completion of $\Q$ with respect to the $p$-adic norm defines the $p$-adic field
 $\Q_p$. Any $p$-adic number $x\ne 0$ can be uniquely represented
in the canonical form
\begin{equation}\label{ek}
x = p^{\gamma(x)}(x_0+x_1p+x_2p^2+\dots),
\end{equation}
where $\gamma(x)\in \Z$ and the integers $x_j$ satisfy: $x_0 > 0$, $0\leq x_j \leq p - 1$ (see
\cite{29,41,48}). In this case $|x|_p = p^{-\gamma(x)}$.\\

Our analysis will strongly relies on nice properties of the $p$-adic norm, and on the two following classical results in $p$-adic algebra.

\begin{thm}[\cite{29, 48}]\label{tx2}
The equation
$x^2 = a$, $0\ne a =p^{\gamma(a)}(a_0 + a_1p + ...), 0\leq a_j \leq p - 1$, $a_0 > 0$
has a solution $x\in \Q_p$ if and only if the following conditions
are fulfilled:

i) $\gamma(a)$ is even;

ii) $a_0$ is a quadratic residue modulo $p$ if $p\ne 2$; $a_1 = a_2 = 0$ if $p = 2$.
\end{thm}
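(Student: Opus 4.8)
The plan is to reduce to the case where $a$ is a $p$-adic unit and then treat $p\neq 2$ and $p=2$ separately, proving sufficiency in each case by a Hensel-type successive-approximation argument and necessity by reduction modulo a suitable power of $p$.

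\textbf{Reduction to a unit.} If $x\in\Q_p$ satisfies $x^2=a\neq 0$, then $|x|_p^2=|a|_p=p^{-\gamma(a)}$, so $|x|_p=p^{-\gamma(a)/2}$; since $|x|_p$ is an integral power of $p$, this forces $\gamma(a)$ to be even, which is the necessity of (i). Conversely, writing $\gamma(a)=2k$ and looking for $x$ in the form $x=p^k y$, the equation becomes $y^2=u$ with $u=a_0+a_1p+\cdots$ a unit. Hence it suffices to decide solvability of $y^2=u$ with $y\in\Z_p^\times$, and from now on we assume $a=u$ is a unit.

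\textbf{Case $p\neq 2$.} Reducing $y^2=u$ modulo $p$ gives $y_0^2\equiv a_0\pmod p$, so $a_0$ must be a quadratic residue mod $p$; this is the necessity of (ii). For sufficiency, choose $y_0\in\{1,\dots,p-1\}$ with $y_0^2\equiv a_0\pmod p$ and construct $y=y_0+y_1p+y_2p^2+\cdots$ digit by digit. Given a partial sum $s_n=\sum_{j<n}y_jp^j$ with $s_n^2\equiv u\pmod{p^n}$, write $s_n^2-u=p^n d_n$; then $s_{n+1}=s_n+y_np^n$ satisfies $s_{n+1}^2-u\equiv p^n(d_n+2s_ny_n)\pmod{p^{n+1}}$, so we need $2y_0y_n\equiv -d_n\pmod p$ (using $s_n\equiv y_0$). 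Since $p\neq 2$ and $y_0\not\equiv 0$, the coefficient $2y_0$ is invertible mod $p$, so $y_n$ is uniquely determined; the resulting $y\in\Z_p$ satisfies $y^2=u$. Equivalently one invokes Hensel's lemma for $f(Y)=Y^2-u$ at the simple root $y_0$, where $f'(y_0)=2y_0$ is a unit.

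\textbf{Case $p=2$.} This is the delicate case, since $f'(Y)=2Y$ is never a unit and the naive lifting step fails. For necessity, any odd $y=1+2m$ with $m\in\Z_2$ has $y^2=1+4m(m+1)$, and $m(m+1)\in 2\Z_2$, so $y^2\equiv 1\pmod 8$; hence $u\equiv 1\pmod 8$, i.e.\ $a_0=1$ and $a_1=a_2=0$. For sufficiency, assume $u\equiv 1\pmod 8$ and observe that $y_0=1$ gives $|f(1)|_2=|1-u|_2\le 2^{-3}<2^{-2}=|f'(1)|_2^2$, so the refined (quadratic) Hensel lemma applies. Alternatively, argue by induction: if $z$ is odd with $z^2\equiv u\pmod{2^n}$ for some $n\ge 3$, write $z^2-u=2^n t$ and set $z'=z+2^{n-1}$, so $z'^2-u\equiv 2^n(t+z)\pmod{2^{n+1}}$ because $2n-2\ge n+1$; since $z$ is odd, exactly one of $t$, $t+z$ is even, so one of $z,z'$ has square $\equiv u\pmod{2^{n+1}}$. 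Iterating from $z=1$ produces a Cauchy sequence in $\Z_2$ whose limit $y$ satisfies $y^2=u$. The main obstacle throughout is precisely this $p=2$ step, where one must replace the unit-derivative Hensel argument by the sharper estimate $|f(y_0)|_p<|f'(y_0)|_p^2$ (or by the explicit induction above).
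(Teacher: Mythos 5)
Your proof is correct. The paper itself gives no proof of Theorem \ref{tx2} (it is quoted as a classical fact from the cited references), and your argument is exactly the standard one found there: reduce to a unit (forcing $\gamma(a)$ even), apply ordinary Hensel lifting at the simple root for $p\neq 2$, and for $p=2$ use the mod $8$ computation for necessity together with the stronger form of Hensel's lemma for sufficiency --- note that your refined-Hensel step is precisely the paper's Theorem \ref{hl} applied to $F(Y)=Y^2-u$ with $a_0=1$ and $\gamma=1$.
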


The elements of the set $\mathbb{Z}_p=\{x\in \Q_p: |x|_p\leq 1\}$ are called $p$-adic integers.\\

The following statement is known as {\em Hensel's lemma} \cite[Theorem 3.15]{AK}.

\begin{thm}\label{hl} Let $F(x)=\sum_{i=0}^nc_ix^i$ be a polynomial whose coefficients are $p$-adic integers. Let
 $F'(x)=\sum_{i=0}^nic_ix^{i-1}$ be the derivative of $F(x)$. Assume there exist $a_0\in \mathbb{Z}_p$
 and $\gamma\in \{0,1,2,\dots\}$ such that
 $$\begin{array}{lll}
 F(a_0)\equiv 0 \,(\operatorname{mod} p^{2\gamma+1}),\\[2mm]
 F'(a_0)\equiv 0 \,(\operatorname{mod} p^{\gamma}),\\[2mm]
 F'(a_0)\neq 0 \,(\operatorname{mod} p^{\gamma+1}).
 \end{array}$$
  Then there exists $a\in \mathbb{Z}_p$ such that
 $F(a)=0$ and $a\equiv a_0\,(\operatorname{mod} p^{\gamma+1})$.
 \end{thm}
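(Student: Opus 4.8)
The plan is to run a $p$-adic Newton iteration and to control $p$-adic valuations at every step by the strong triangle inequality, in particular its sharp form \eqref{dg}. The algebraic input is the Taylor expansion of a polynomial: for $F$ with coefficients in $\mathbb{Z}_p$ and any $x,h\in\mathbb{Z}_p$,
\[
F(x+h)=F(x)+F'(x)\,h+h^2\,R(x,h),\qquad R(x,h)\in\mathbb{Z}_p ,
\]
which follows by expanding each monomial $(x+h)^i$ with the binomial theorem and noting that every binomial coefficient is a rational integer, hence lies in $\mathbb{Z}_p$, so that all terms of order $\ge 2$ in $h$ carry a $p$-adic integer coefficient. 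The three hypotheses translate into $|F(a_0)|_p\le p^{-(2\gamma+1)}$ and $|F'(a_0)|_p=p^{-\gamma}$, the congruence $F'(a_0)\not\equiv 0\pmod{p^{\gamma+1}}$ being exactly what upgrades the divisibility $F'(a_0)\equiv 0\pmod{p^{\gamma}}$ to this equality of norms.

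First I would set $a_{n+1}=a_n-F(a_n)/F'(a_n)$ and prove by induction on $n$ that: (a) $a_n\in\mathbb{Z}_p$ and $a_n\equiv a_0\pmod{p^{\gamma+1}}$; (b) $|F'(a_n)|_p=p^{-\gamma}$; and (c) $w_n:=p^{2\gamma}|F(a_n)|_p$ satisfies $w_n\le p^{-1}$ and $w_{n+1}\le w_n^2$. For the inductive step: from (b) and (c) the increment $h_n:=-F(a_n)/F'(a_n)$ has $|h_n|_p=p^{\gamma}|F(a_n)|_p=w_n p^{-\gamma}\le p^{-(\gamma+1)}$, giving (a) at stage $n+1$. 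Since $F'$ has coefficients in $\mathbb{Z}_p$ and $a_{n+1}-a_n\in p^{\gamma+1}\mathbb{Z}_p$, one has $F'(a_{n+1})=F'(a_n)+(\text{element of }p^{\gamma+1}\mathbb{Z}_p)$; as $|F'(a_n)|_p=p^{-\gamma}>p^{-(\gamma+1)}$, \eqref{dg} forces $|F'(a_{n+1})|_p=|F'(a_n)|_p=p^{-\gamma}$, which is (b). Finally, applying the Taylor expansion with $x=a_n$, $h=h_n$, the linear term cancels $F(a_n)$ and leaves $F(a_{n+1})=h_n^2 R(a_n,h_n)$, so $|F(a_{n+1})|_p\le|h_n|_p^2=w_n^2 p^{-2\gamma}$, i.e.\ $w_{n+1}\le w_n^2$; and $w_0=p^{2\gamma}|F(a_0)|_p\le p^{-1}$, so $w_n\le w_0^{2^n}\le p^{-2^n}$ stays below $p^{-1}$ throughout, closing (c).

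With the invariants in hand, $|a_{n+1}-a_n|_p=|h_n|_p=w_n p^{-\gamma}\le p^{-2^n-\gamma}\to 0$, so $(a_n)$ is Cauchy; by completeness of $\Q_p$ it converges to some $a$, which lies in $\mathbb{Z}_p$ and, applying the ultrametric inequality to the telescoping sum $a-a_0=\sum_{n\ge 0}(a_{n+1}-a_n)$, satisfies $|a-a_0|_p\le p^{-(\gamma+1)}$, that is $a\equiv a_0\pmod{p^{\gamma+1}}$. Since a polynomial is continuous for the $p$-adic norm, $F(a)=\lim_n F(a_n)$, and $|F(a_n)|_p=w_n p^{-2\gamma}\to 0$ forces $F(a)=0$.

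All steps are elementary computations; the only delicate point, and where the non-Archimedean structure genuinely does the work, is invariant (b). The Newton step divides by $F'(a_n)$, which is legitimate only while $|F'(a_n)|_p$ stays exactly equal to $p^{-\gamma}$, and it is precisely the strict inequality $|F'(a_n)|_p>p^{-(\gamma+1)}$ combined with \eqref{dg} that keeps this norm frozen along the iteration. This is also why the hypothesis $F'(a_0)\not\equiv 0\pmod{p^{\gamma+1}}$ cannot be dropped: without it the very first division may fail and the quadratic estimate $w_{n+1}\le w_n^2$ collapses.
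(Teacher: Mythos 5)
Your proof is correct: the Newton iteration with the invariants $|F'(a_n)|_p=p^{-\gamma}$ (frozen via the sharp ultrametric equality \eqref{dg}) and $w_{n+1}\le w_n^2$ is exactly the classical argument for this strengthened form of Hensel's lemma, and all steps — the integrality of the Taylor remainder, the estimate $|h_n|_p\le p^{-(\gamma+1)}$, the Cauchy property and the passage to the limit — check out. Note that the paper itself offers no proof to compare against: it quotes the statement as Theorem 3.15 of the cited book of Anashin and Khrennikov, so your argument simply supplies the standard proof that the reference contains.
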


Given $a\in \Q_p$ and $r > 0$ put
$$B(a, r) = \{x\in \Q_p : |x-a|_p < r\}.$$

The $p$-adic {\it logarithm} is defined by the series
$$\log_p(x) =\log_p(1 + (x-1)) =
\sum_{n=1}^\infty (-1)^{n+1}{(x-1)^n\over n},$$
which converges for $x\in B(1, 1)$; the $p$-adic exponential is defined by
$$\exp_p(x) =\sum^\infty_{n=0}{x^n\over n!},$$
which converges for $x \in B(0, p^{-1/(p-1)})$.


\begin{lemma}[\cite{29}]\label{l1} Let $x\in B(0, p^{-1/(p-1)})$, then
$$|\exp_p(x)|_p = 1,\ \ |\exp_p(x)-1|_p = |x|_p, \ \ |\log_p(1 + x)|_p = |x|_p,$$
$$\log_p(\exp_p(x)) = x,\ \ \exp_p(\log_p(1 + x)) = 1 + x.$$
\end{lemma}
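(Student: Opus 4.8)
\textit{Proof proposal.} The plan is to prove the three norm identities first, by showing in each of the convergent series that the linear term $p$-adically dominates all the others, and then to deduce the two functional identities by composing convergent power series. The only external input required is control of the valuations $\gamma(n!)$ and $\gamma(n)$, where for a nonzero integer $m$ I write $\gamma(m)$ for its $p$-adic valuation, so that $|m|_p = p^{-\gamma(m)}$. Legendre's formula gives $\gamma(n!) = (n - s_p(n))/(p-1)$, with $s_p(n)\ge 1$ the base-$p$ digit sum of $n$, hence $\gamma(n!) \le (n-1)/(p-1)$ for all $n\ge 1$; and if $\gamma(n) = k$ then $n \ge p^k$, so $\gamma(n) = k \le 1 + p + \cdots + p^{k-1} = (p^k - 1)/(p-1) \le (n-1)/(p-1)$. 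Thus $\gamma(n!) \le (n-1)/(p-1)$ and $\gamma(n) \le (n-1)/(p-1)$ for every $n \ge 1$.

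Now fix $x \in B(0, p^{-1/(p-1)})$, so $|x|_p < p^{-1/(p-1)}$. For $n \ge 2$ this gives $|x|_p^{n-1} < p^{-(n-1)/(p-1)}$, hence, using $\gamma(n!) \le (n-1)/(p-1)$,
\[
\left| \frac{x^n}{n!} \right|_p = |x|_p \cdot |x|_p^{n-1} \cdot p^{\gamma(n!)} < |x|_p \cdot p^{\,\gamma(n!) - (n-1)/(p-1)} \le |x|_p .
\]
So in $\exp_p(x) - 1 = \sum_{n \ge 1} x^n/n!$ the term $n=1$ has strictly larger norm than every other term; since the strong triangle inequality bounds $\big|\sum_{n \ge 2} x^n/n!\big|_p$ by $\max_{n\ge 2} |x^n/n!|_p < |x|_p$, the fact \eqref{dg} yields $|\exp_p(x) - 1|_p = |x|_p$. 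Because $|x|_p < p^{-1/(p-1)} < 1 = |1|_p$, one more application of \eqref{dg} gives $|\exp_p(x)|_p = |1 + (\exp_p(x) - 1)|_p = 1$. The identity $|\log_p(1+x)|_p = |x|_p$ is obtained in exactly the same way from $\log_p(1+x) = \sum_{n \ge 1} (-1)^{n+1} x^n/n$, using the bound $\gamma(n) \le (n-1)/(p-1)$ in place of the one for $\gamma(n!)$.

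Finally, the norm identities just proved show $\exp_p\big(B(0,p^{-1/(p-1)})\big) \subseteq 1 + B(0,p^{-1/(p-1)})$ and $\log_p\big(1 + B(0,p^{-1/(p-1)})\big) \subseteq B(0,p^{-1/(p-1)})$, so both $x \mapsto \log_p(\exp_p(x))$ and $x \mapsto \exp_p(\log_p(1+x))$ are well defined on $B(0,p^{-1/(p-1)})$. On this disc the general term of each series tends $p$-adically to $0$, so substituting one convergent $p$-adic power series into another is legitimate and commutes with formal composition; hence these two functions are given by the formal power series $\log(\exp(X))$ and $\exp(\log(1+X))$ of $\Q[[X]]$, which equal $X$ and $1+X$ respectively. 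Evaluating at $x$ gives the last two identities. (Equivalently, $x \mapsto \log_p(\exp_p(x)) - x$ is analytic on the disc, vanishes at $0$, and has derivative $\exp_p'(x)/\exp_p(x) - 1 = 0$, hence vanishes identically by Strassman's theorem; similarly for the other composition.) The only genuinely delicate point is the valuation bookkeeping in the middle paragraph, and it reduces entirely to the two elementary inequalities $\gamma(n!) \le (n-1)/(p-1)$ and $\gamma(n) \le (n-1)/(p-1)$; the rearrangement used for the functional identities is routine in the non-Archimedean setting.
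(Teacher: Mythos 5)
The paper itself gives no proof of this lemma: it is quoted directly from Koblitz \cite{29}, so there is no internal argument to compare against. Your proof is correct and is essentially the standard textbook one: the valuation bounds $\gamma(n!)\le (n-1)/(p-1)$ and $\gamma(n)\le (n-1)/(p-1)$ make the linear term strictly dominant for $x\in B(0,p^{-1/(p-1)})$, and \eqref{dg} then yields the three norm identities (the case $x=0$ being trivial, since the strict-dominance phrasing degenerates there). The one step to state more carefully is the composition: in the non-Archimedean setting, substituting a convergent series into another does \emph{not} commute with formal composition unconditionally; the standard sufficient condition is that the maximal term of the inner series at $x$ --- which by your norm identities has norm $|x|_p$ --- lies strictly inside the region where the outer series converges, and this is exactly what holds on $B(0,p^{-1/(p-1)})$, so your conclusion is fine but the blanket claim should be replaced by this precise criterion. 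Note also that your parenthetical alternative via a vanishing derivative still presupposes that the composite is given by a power series on the disc (a merely differentiable $p$-adic function with zero derivative need not be constant), so it does not actually bypass the composition theorem.
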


Let $(X,{\mathcal B})$ be a measurable space, where ${\mathcal B}$ is an algebra of subsets of $X$. A function $\mu: {\mathcal B}\to \Q_p$
is said to be a $p$-adic measure if for any $A_1, . . . ,A_n\in {\mathcal B}$ such that $A_i\cap A_j = \emptyset$, $i\ne j$, the following holds:
$$\mu(\bigcup^n_{j=1}A_j)=\sum^n_{j=1}\mu(A_j).$$
A $p$-adic measure is called a {\em $p$-adic probability measure} if $\mu(X) = 1$, see, {\em e.g.} \cite{22, 40}. Let us warn that due to the different axiomatic and ring of values, some intuitive properties of sets of probability measures (like {\em e.g.} some convex properties) are not valid anymore \cite{RK}.

\subsection{Tree.} A tree is a connected graph without cycles (see \cite{Rtr} for more details).
Let $\mathcal T=(V, L)$ be a tree, where $V$ is the set of vertices and  $L$ is the set of edges.
Two vertices $x$ and $y$ are called {\it nearest neighbors} if there exists an
edge $b \in L$ connecting them.
We will use the notation $b=\langle x,y\rangle$ for the edge connecting the vertices $x$ and $y$.
A collection of nearest neighbor pairs $\langle x,x_1\rangle, \langle x_1,x_2\rangle,...,\langle x_{d-1},y\rangle$ is called a {\it
path} from $x$ to $y$. The distance $d(x,y)$ on the tree is the number of edges of the shortest path from $x$ to $y$.

For $z\in V$, we denote
$$L^z=\{\langle x,y\rangle\in L: d(z,x)=d(z,y)+1\},$$
$$\prescript{z}{}{L}=\{\langle x,y\rangle\in L: d(z,y)=d(z,x)+1\}.$$
Let $A\subset V$. Denote
$$\partial A=\{x\in V\setminus A: \ \exists y\in A, \ \ \langle x, y \rangle\}.$$

\section{$p$-adic Markov chain and boundary laws} We consider a system with nearest
 neighbor interactions on a tree
where the spins assigned to the vertices of the tree take values in the set $\Phi:=\{1,2,\dots, q\}$.

A configuration $\sigma_A$ on $A\subset V$ is then defined as a function $x\in A\mapsto\sigma_A(x)\in\Phi$.
The set of all configurations is $\Phi^A$.

By $p$-adic probability vector we mean a vector
with $p$-adic valued coordinates summing to 1.
A $p$-adic stochastic matrix is a matrix with each row being a $p$-adic probability vector.

 For each edge $b=\langle x, y\rangle\in L$ we consider a stochastic
matrix $\mathbb P_b=\left(P_b(i,j) \right)_{i,j=1}^q$. For each $x\in V$
consider a probability vector $\alpha_x=(\alpha_{1,x}, \dots, \alpha_{q,x})$.

For any edge $b=\langle x, y\rangle\in L$  we assume that
\begin{equation}\label{cs}
\alpha_{i,x}P_b(i,j)=\alpha_{j,y}P_b(j,i), \ \ \forall i,j\in \Phi.
\end{equation}

\begin{defn} A $p$-adic probability distribution (measure) $\mu$ is
called a $p$-adic Markov chain with transition matrices $(P_b)_{b\in L}$ and marginal distribution
$\alpha_x$ at $x\in V$ if for all finite, connected set $\Lambda \subset V$, and all $\zeta\in \Phi^\Lambda$
and $z\in \Lambda$ the following holds
\begin{equation}\label{pm}
\mu(\sigma_\Lambda=\zeta)=\alpha_z(\zeta_z)\prod_{{\langle x, y\rangle\in ^zL:\atop x,y\in \Lambda}}P_{\langle x, y\rangle}(\zeta_x,\zeta_y).
\end{equation}
\end{defn}

Note that the reversibility condition (\ref{cs}) is equivalent to the statement that
the expression on the right of (\ref{pm}) is independent of the choice of $z\in \Lambda$.

Consider for each edge $b=\langle x, y \rangle$ a
matrix  $\mathcal Q_b=(Q_b(i,j))_{i,j=1}^q$.  We always assume
\begin{align}\label{condition-matrix}
\begin{array}{lll}
Q_{\langle x, y \rangle}(i,j)=Q_{\langle y, x \rangle}(j,i),\\[3mm]
 \sum_{j=1}^{q}Q_{\langle x, y \rangle}(j,i)=1.
 \end{array}
\end{align}

Let $\boldsymbol{z}(x,y)=(z_1(x,y), \dots, z_q(x,y))$ be a vector in $\mathbb{Q}_p$.

\begin{defn} For $(\mathcal Q_b)_{b\in L}$ satisfying (\ref{condition-matrix}), a \emph{$p$-adic boundary law}\footnote{Compare with real boundary law
of \cite[Definition~(12.10)]{Ge}.} $\{\boldsymbol{z}(x,y)\}_{\langle x,y\rangle \in
L}$ is such that for any $\langle
x,y\rangle\in L$, and for all $i\in\Phi$, it holds
\begin{equation}\label{eq0}
z_i(x,y)=c(x,y)\prod_{v\in\partial\{x\}\setminus\{y\}}\sum_{j\in\Phi}
z_j(v,x)Q_{\langle v, x \rangle}(j,i),
\end{equation}
where $c(x,y)$ is an arbitrary constant (not depending on
$i\in\Phi$).
\end{defn}

Using (\ref{condition-matrix}) and proceeding as in the classical case of \strut{}\cite[Formula~(12.13), page~243]{Ge}, one directly gets that each boundary law
$$\boldsymbol{z}=\{\boldsymbol{z}(x,y)=(z_1(x,y), \dots, z_q(x,y))\}_{\langle x, y \rangle\in L}$$
defines a $p$-adic Markov chain $\mu^{\boldsymbol{z}}$:  for any finite \emph{connected} set
$\varnothing\ne\varLambda\subset V$ (and
$\bar{\varLambda}=\varLambda\cup\partial\varLambda$), one has
\begin{equation}\label{eq:mu-h-ex-Lambda}
\mu^{\boldsymbol{z}}(\sigma_{\myn\bar{\varLambda}}\myn=\varsigma)=\frac{1}{Z_{\myn\bar{\varLambda}}}\prod_{x\in \partial \Lambda}
{z}_{\varsigma(x)}(x,x_\Lambda)\prod_{{b\in L:\atop b\cap \Lambda\ne \emptyset}}Q_b(\varsigma_b),
\end{equation}
where $Z_{\bar{\varLambda}}=Z_{\bar{\varLambda}}(\boldsymbol{z})$ is the normalizing factor, $x_{\myn\varLambda}$
denotes the unique neighbor of $x\in\partial\varLambda$ belonging
to $\varLambda$, and $\varsigma_b=(\varsigma(u), \varsigma(v))$, for $b=\langle u, v\rangle$.
We stress that the first condition in (\ref{condition-matrix}), which is \cite[Formula~(12.9)]{Ge}, 
is needed to check that $\mu^{\boldsymbol{z}}$ is a well defined $p$-adic Markov chain.

\section{{Criterion for uniqueness of the $p$-adic Markov chain}}

A $p$-adic Markov chain can be considered as a particular case of $p$-adic Gibbs measure defined through the $p$-adic
exponential $\exp_p(x)$, with $|x|_p<  p^{-1/(p-1)}$ \cite{36}.
As it was mentioned above, the set of values of a $p$-adic norm $|\cdot|_p$ is $\{p^m: m\in \Z\}$, {so}
the condition $|x|_p<  p^{-1/(p-1)}$ is equivalent to the condition $|x|_p\leq {1\over p}$.
Consequently, {we shall restrict part of the analysis to quantities belonging} to the set:
$$\mathcal E_p=\left\{x\in \Q_p: |x-1|_p\leq {1\over p}\right\}.$$

The following lemma {will also be} useful (see \cite[Lemma 4.6]{36}).

\begin{lemma}\label{l2} If $a_i\in  \Q_p$  for all $i=1,\dots,m$ are such that
$$|a_i|_p=1, \ \ \ |a_i-1|_p\leq M,$$
then
$$\left|\prod_{i=1}^ma_i\right|_p=1, \ \ \left|\prod_{i=1}^ma_i-1\right|_p\leq M.$$
\end{lemma}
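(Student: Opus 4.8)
The plan is to prove both assertions by induction on $m$, using the multiplicativity of the $p$-adic norm together with the strong triangle inequality, more precisely the sharpened form \eqref{dg}. The base case $m=1$ is trivial. For the inductive step, write $\prod_{i=1}^{m}a_i = \big(\prod_{i=1}^{m-1}a_i\big)\,a_m$; by the induction hypothesis the partial product $P:=\prod_{i=1}^{m-1}a_i$ satisfies $|P|_p=1$ and $|P-1|_p\le M$, and by hypothesis $|a_m|_p=1$. Then $|\prod_{i=1}^{m}a_i|_p=|P|_p\,|a_m|_p=1$ by multiplicativity of the norm, which settles the first claim.

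For the second claim, the key algebraic identity is
\begin{equation*}
\prod_{i=1}^{m}a_i - 1 = P a_m - 1 = P(a_m-1) + (P-1).
\end{equation*}
Now estimate each summand. We have $|P(a_m-1)|_p = |P|_p\,|a_m-1|_p = 1\cdot|a_m-1|_p \le M$, and $|P-1|_p \le M$ by the induction hypothesis. Applying the strong triangle inequality,
\begin{equation*}
\left|\prod_{i=1}^{m}a_i - 1\right|_p \le \max\big\{\,|P(a_m-1)|_p,\ |P-1|_p\,\big\} \le M,
\end{equation*}
which completes the induction.

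There is essentially no obstacle here: the only point requiring a moment's care is choosing the right telescoping decomposition $Pa_m-1 = P(a_m-1)+(P-1)$ rather than a naive one, so that the factor $|P|_p=1$ (supplied by the first part of the lemma, proved in tandem) can be used to control $|P(a_m-1)|_p$. One could alternatively phrase the argument non-inductively by observing that each $a_i$ lies in $1+p^{\,k_i}\mathbb{Z}_p$ for suitable $k_i$ with $p^{-k_i}\le M$ and that such sets are closed under multiplication, but the inductive argument above is the cleanest and is self-contained given only \eqref{dg} and the multiplicativity of $|\cdot|_p$.
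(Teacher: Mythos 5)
Your proof is correct. Note that the paper itself gives no proof of this lemma: it is quoted as a known result (Lemma 4.6 of \cite{36}), so your argument supplies a proof rather than parallels one, and it is the standard one: induction on $m$, multiplicativity of $|\cdot|_p$ for the first claim, and the telescoping decomposition $Pa_m-1=P(a_m-1)+(P-1)$ with the strong triangle inequality for the second. One small remark: you announce the sharpened form \eqref{dg}, but your argument never actually needs it --- the plain inequality $|x+y|_p\leq\max\{|x|_p,|y|_p\}$ suffices, since you only require an upper bound by $M$, not an equality of norms.
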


Without loss of generality, we set hereafter $z_{q}(v,x)\equiv 1$ (a normalization at $q$).
Then the condition (\ref{eq0}) for the stochastic matrix $\mathcal Q_b=(Q_b(i,j))_{i,j=1}^q$
reads
\begin{equation}\label{e1}
z_i(x,y)=\prod_{v\in\partial\{x\}\setminus\{y\}}{1+\sum_{j=1}^{q-1}(z_j(v,x)-1)Q_{\langle v, x \rangle}(j,i)\over
1+\sum_{j=1}^{q-1}(z_j(v,x)-1)Q_{\langle v, x \rangle}(j,q)}, \ \ i=1,2,\dots, q-1.
\end{equation}
Here we have used
$$Q_{\langle v, x \rangle}(q,i)=1-\sum_{j=1}^{q-1}Q_{\langle v, x \rangle}(j,i), \ \ i=1,2,\dots ,q.$$

 In this section we examine the conditions on the parameters $k\geq 1$ and on $\mathcal Q_b$ for the existence and the uniqueness of the solutions of the equation (\ref{e1}).

For the uniqueness, we ssume that the matrix $\mathcal Q_b=(Q_b(i,j))_{i,j=1}^q$ satisfies the following conditions
\begin{equation}\label{cq}
\begin{array}{lll}
|Q_{\langle x, y \rangle}(j,i)|_p\leq 1, \\[3mm]
\left|Q_{\langle x, y \rangle}(j,i)-Q_{\langle x, y \rangle}(j,q)\right|_p\leq {1\over p},  \ \ \forall \langle x, y \rangle, \ \ \forall i,j.
\end{array}
\end{equation}

\begin{thm}\label{t2} Assume each vertex of the tree has degree {at least $2$ and that the matrix} $\mathcal Q_b=(Q_b(i,j))_{i,j=1}^q$ satisfies (\ref{condition-matrix}) and (\ref{cq}).
Then the equation (\ref{e1}) has a unique solution $\boldsymbol{z}(x,y)\equiv (1,1,\dots,1)\in {\mathcal E}^{q-1}_p$, $\langle x, y \rangle\in L$.

\end{thm}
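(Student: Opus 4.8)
The plan is to show that the right-hand side of \eqref{e1} maps the ``all-ones'' vector to itself (so $\boldsymbol{z}\equiv(1,\dots,1)$ is a solution) and then that it is a contraction in the $p$-adic metric on ${\mathcal E}_p^{q-1}$, so that this solution is the only one. First I would observe that $\boldsymbol{z}(x,y)\equiv(1,\dots,1)$ is indeed a solution: plugging $z_j(v,x)=1$ into \eqref{e1} makes every numerator and denominator equal to $1$, so each factor equals $1$ and the product is $1$. The real content is uniqueness.

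For uniqueness I would work on the complete metric space $X={\mathcal E}_p^{q-1}$ equipped with the sup-metric $d(\boldsymbol{z},\boldsymbol{z}')=\max_i|z_i-z_i'|_p$, and analyze the map $F$ defined by the right side of \eqref{e1}. The key estimate is on a single factor
$$
\frac{1+\sum_{j=1}^{q-1}(z_j-1)Q(j,i)}{1+\sum_{j=1}^{q-1}(z_j-1)Q(j,q)}.
$$
Write the numerator as $1+A_i$ with $A_i=\sum_j (z_j-1)Q(j,i)$, and similarly the denominator $1+A_q$. Since $|z_j-1|_p\le 1/p$ and $|Q(j,i)|_p\le 1$, the strong triangle inequality gives $|A_i|_p\le 1/p$, hence $|1+A_i|_p=1$ by \eqref{dg}, so the factor is a unit; in fact the factor lies in ${\mathcal E}_p$, since $(1+A_i)-(1+A_q)=A_i-A_q=\sum_j(z_j-1)(Q(j,i)-Q(j,q))$, and by the second hypothesis in \eqref{cq} together with $|z_j-1|_p\le 1/p$ each term has norm $\le \frac1p\cdot\frac1p$, so $|A_i-A_q|_p\le 1/p^2$ and therefore $\bigl|\frac{1+A_i}{1+A_q}-1\bigr|_p = |A_i-A_q|_p\le 1/p^2 \le 1/p$. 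By Lemma~\ref{l2} the product over $v\in\partial\{x\}\setminus\{y\}$ again lies in ${\mathcal E}_p$, so $F$ maps $X$ into $X$.

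Next, the contraction estimate. Given two points $\boldsymbol{z},\boldsymbol{z}'\in X$ put $\delta=d(\boldsymbol{z},\boldsymbol{z}')$. For a single factor, with $A_i,A_q$ as above for $\boldsymbol{z}$ and $A_i',A_q'$ for $\boldsymbol{z}'$, one has
$$
\frac{1+A_i}{1+A_q}-\frac{1+A_i'}{1+A_q'}
=\frac{(A_i-A_i')+(A_q'A_i-A_qA_i')}{(1+A_q)(1+A_q')},
$$
and since both denominators are units, the norm of this difference equals $|(A_i-A_i')+(A_i A_q'-A_i' A_q)|_p$. Now $A_i-A_i'=\sum_j(z_j-z_j')Q(j,i)$ has norm $\le \delta$; but we can do better by subtracting $A_q-A_q'$ (which is legitimate because we will compare factor-ratios through products), writing $A_i-A_i'=(A_i-A_q)-(A_i'-A_q')+(A_q-A_q')$. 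In fact it is cleaner to estimate directly $\left|\frac{1+A_i}{1+A_q}-\frac{1+A_i'}{1+A_q'}\right|_p$ and use that both ratios are $1+O(1/p^2)$; the crucial point is that $A_i-A_i' = \sum_j (z_j-z_j')(Q(j,i)-Q(j,q)) + (A_q-A_q')$, and since the ratio $\frac{1+A_i}{1+A_q}$ only depends on $A_i-A_q$ up to higher order, one gets that the factor-ratio for $\boldsymbol{z}$ and for $\boldsymbol{z}'$ differ by at most $\frac1p\,\delta$, using $|Q(j,i)-Q(j,q)|_p\le 1/p$. Finally, since each factor is a unit, a telescoping identity $\prod a_v - \prod a_v' = \sum_v (a_v-a_v')\prod_{w<v}a_w'\prod_{w>v}a_w$ together with the strong triangle inequality shows $|F_i(\boldsymbol{z})-F_i(\boldsymbol{z}')|_p\le \max_v|a_v-a_v'|_p\le \frac1p\,\delta$. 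Hence $d(F(\boldsymbol{z}),F(\boldsymbol{z}'))\le \frac1p\,d(\boldsymbol{z},\boldsymbol{z}')$, so $F$ is a $\frac1p$-contraction; by the $p$-adic Banach fixed-point theorem (valid since $X$ is a complete ultrametric space) it has a unique fixed point, which must be $(1,\dots,1)$.

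The main obstacle I anticipate is the bookkeeping in the contraction step: one must genuinely exploit the second hypothesis in \eqref{cq}, namely $|Q(j,i)-Q(j,q)|_p\le 1/p$, rather than only $|Q(j,i)|_p\le 1$, to gain the extra factor $1/p$; a naive bound gives only $d(F\boldsymbol{z},F\boldsymbol{z}')\le \delta$, which is not a strict contraction. The trick is that the ratio of numerator to denominator in each factor depends on the $Q(j,i)$ only through the differences $Q(j,i)-Q(j,q)$ after normalizing, so the relevant increments are of size $\le 1/p$. The degree-at-least-$2$ hypothesis guarantees that the product in \eqref{e1} is over a non-empty set of neighbors $v$ (so the equation is non-degenerate), while the passage to the full tree — showing the unique boundary law gives the unique Markov chain — is handled by \eqref{eq:mu-h-ex-Lambda}. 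One should also remark that the uniqueness of the fixed point among all of ${\mathcal E}_p^{q-1}$ is exactly what the theorem claims, so no separate argument outside this ball is needed.
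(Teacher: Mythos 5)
Your proposal is sound in substance, and its engine is the same as the paper's: each factor $K_i=\frac{1+A_i}{1+A_q}$ is a unit lying in ${\mathcal E}_p$, the gain of the factor $\frac1p$ comes from pairing $z_j-1$ (or $z_j-z_j'$) with the differences $Q(j,i)-Q(j,q)$, whose norm is at most $\frac1p$ by the second condition in (\ref{cq}), and the product is then controlled by Lemma \ref{l2} (your telescoping identity with norm-one factors is essentially its proof). The packaging, however, differs. The paper never sets up a contraction mapping: it takes an arbitrary solution $\boldsymbol{z}$ with values in ${\mathcal E}_p^{q-1}$ and compares it directly with the known solution $\boldsymbol{1}$ (i.e. your estimate specialized to $\boldsymbol{z}'=\boldsymbol{1}$, where $A_i'=A_q'=0$, which makes the algebra much lighter), obtaining $\|\boldsymbol{z}(x,y)-1\|\le\frac1p\max_{v\in\partial\{x\}\setminus\{y\}}\|\boldsymbol{z}(v,x)-1\|$, and then iterates this inequality along a path in the tree (this is where degree $\ge 2$ is used), so the a priori bound $\|\cdot-1\|\le\frac1p$ forces $\|\boldsymbol{z}(x,y)-1\|\le p^{-(n+1)}$ for all $n$; no fixed-point theorem is invoked. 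Your route — a genuine two-point contraction plus the Banach fixed-point theorem — also works and is even slightly more robust (an empty neighbour set just gives a constant factor $1$), but it needs two repairs. First, the map $F$ must act on the space of \emph{all} boundary laws, i.e. on $({\mathcal E}_p^{q-1})^{L}$ with the supremum over edges and coordinates (still a nonempty, bounded, complete ultrametric space because ${\mathcal E}_p$ is a closed ball), not on a single copy of ${\mathcal E}_p^{q-1}$. Second, your displayed identity for the difference of two factors omits the term $A_q'-A_q$: the correct numerator is $(A_i-A_i')+(A_q'-A_q)+(A_iA_q'-A_i'A_q)$, and it is precisely the omitted term that must be grouped with $A_i-A_i'$ to give $(A_i-A_q)-(A_i'-A_q')=\sum_j(z_j-z_j')\bigl(Q(j,i)-Q(j,q)\bigr)$, of norm at most $\frac{\delta}{p}$, while the cross terms, rewritten as $A_i(A_q'-A_q)+A_q(A_i-A_i')$, are also of norm at most $\frac{\delta}{p}$ since $|A_i|_p,|A_q|_p\le\frac1p$. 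Your prose points at this cancellation but leaves it at the level of ``up to higher order''; once written out as above, the bound $\bigl|\frac{1+A_i}{1+A_q}-\frac{1+A_i'}{1+A_q'}\bigr|_p\le\frac{\delta}{p}$ is rigorous, and the $\frac1p$-contraction and the conclusion follow as you claim. In short: correct idea, essentially the paper's estimates, with the Banach formulation costing you a heavier two-point computation where the paper's direct comparison with $\boldsymbol{1}$ suffices.
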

\begin{proof} Since
$$\sum_{j=1}^{q}Q_{\langle v, x \rangle}(j,i)=1, \ \ \forall \langle v, x \rangle, \ \ \forall i,$$
it follows that  $\boldsymbol{z}(x,y)\equiv (1,1,\dots,1)$ is a solution to (\ref{e1}).

We show its uniqueness. For  $z=(z_1,\dots, z_{q-1})\in \Q_p^{q-1}$, we introduce the norm
$$\|z\|=\max_{i}|z_i|_p.$$

Let $\boldsymbol{z}(x,y)\in {\mathcal E}^{q-1}_p$, $\langle x, y \rangle\in L$
be a solution.
Denote
\begin{equation}\label{kk}
K_i\equiv K_i(v,x,q)={1+\sum_{j=1}^{q-1}(z_j(v,x)-1)Q_{\langle v, x \rangle}(j,i)\over
1+\sum_{j=1}^{q-1}(z_j(v,x)-1)Q_{\langle v, x \rangle}(j,q)}.
\end{equation}
Using (\ref{dg}), (\ref{cq}) and Lemma \ref{l2}, we calculate $|K_i|_p$:
$$|K_i|_p=\left|{1+\sum_{j=1}^{q-1}(z_j(v,x)-1)Q_{\langle v, x \rangle}(j,i)\over
1+\sum_{j=1}^{q-1}(z_j(v,x)-1)Q_{\langle v, x \rangle}(j,q)}\right|_p=1.$$
Let us now estimate $|K_i-1|_p$  using (\ref{cq}):
\begin{align*}
|K_i-1|_p&=\left|{\sum_{j=1}^{q-1}[z_j(v,x)-1]\{Q_{\langle v, x \rangle}(j,i)-Q_{\langle v, x \rangle}(j,q)\}\over
1+\sum_{j=1}^{q-1}(z_j(v,x)-1)Q_{\langle v, x \rangle}(j,q)}\right|_p\\
&=\left|\sum_{j=1}^{q-1}[z_j(v,x)-1]\{Q_{\langle v, x \rangle}(j,i)-Q_{\langle v, x \rangle}(j,q)\}\right|_p\\
&\leq \max_{j}\left|z_j(v,x)-1\right|_p\left|Q_{\langle v, x \rangle}(j,i)-Q_{\langle v, x \rangle}(j,q)\right|_p\\
&\leq {1\over p}\|\boldsymbol{z}(v,x)-1\|\leq {1\over p} \|\boldsymbol{z}(\hat v,x)-1\|,
\end{align*}
 where we have used the  hypothesis
 $$\left|Q_{\langle v, x \rangle}(j,i)-Q_{\langle v, x \rangle}(j,q)\right|_p\leq {1\over p},$$
 and $\hat v\equiv \hat v(x,y)$ is defined by
 $$\|\boldsymbol{z}(\hat v,x)-1\|= \max_{v\in\partial\{x\}\setminus\{y\}} \|\boldsymbol{z}(v,x)-1\|.$$
 Thus $K_i$ satisfies the conditions of Lemma \ref{l2}, and we have
\begin{align}\label{e2}
\begin{split}
|z_i(x,y)-1|_p=&\left|\prod_{v\in\partial\{x\}\setminus\{y\}}{1+\sum_{j=1}^{q-1}(z_j(v,x)-1)Q_{\langle v, x \rangle}(j,i)\over
1+\sum_{j=1}^{q-1}(z_j(v,x)-1)Q_{\langle v, x \rangle}(j,q)}-1\right|_p\\
=&\left|\prod_{v\in\partial\{x\}\setminus\{y\}}K_i-1\right|_p\leq {1\over p} \|\boldsymbol{z}(\hat v,x)-1\|.
\end{split}
\end{align}
Consequently,
\begin{equation}\label{zb}
\|\boldsymbol{z}(x,y)-1\|\leq {1\over p}\|\boldsymbol{z}(\hat v,x)-1\|.
\end{equation}
Since this estimation is true for arbitrary edge $\langle x, y \rangle\in L$,
one can start from any edge and then iterate the estimation (\ref{zb}), to obtain the following
\begin{equation}\label{zb1}
\|\boldsymbol{z}(x,y)-1\|\leq {1\over p^n}\|\boldsymbol{z}(\hat v^{(n)},\hat v^{(n-1)})-1\|\leq {1\over p^{n+1}}.
\end{equation}
which as $n\to\infty$ gives $\boldsymbol{z}(x,y)\equiv 1.$
\end{proof}
Denote by $\mu^{\boldsymbol{1}}$ the $p$-adic Markov chain which corresponds to $\boldsymbol{z}(x,y)\equiv (1,\dots,1).$
\begin{cor} Under the conditions of Theorem \ref{t2}, there exists a unique $p$-adic Markov chain, which  satisfies that
for any finite \emph{connected} set
$\varnothing\ne\varLambda\subset V$ (and
$\bar{\varLambda}=\varLambda\cup\partial\varLambda$),
\begin{equation}\label{bda}
\mu^{\boldsymbol{1}}(\sigma_{\myn\bar{\varLambda}}\myn=\varsigma)=\frac{1}{Z_{\myn\bar{\varLambda}}}\prod_{{b\in L:\atop b\cap \Lambda\ne \emptyset}}Q_b(\varsigma_b),
\end{equation}
where
\begin{equation}\label{dz}
Z_{\myn\bar{\varLambda}}=\sum_{\sigma\in \Omega_\Lambda}\prod_{{b\in L:\atop b\cap \Lambda\ne \emptyset}}Q_b(\sigma_b).
\end{equation}
\end{cor}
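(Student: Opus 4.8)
The plan is to deduce the corollary directly from Theorem~\ref{t2} together with the construction of $\mu^{\boldsymbol{z}}$ from a boundary law recorded in~\eqref{eq:mu-h-ex-Lambda}. First I would invoke Theorem~\ref{t2}: under the stated hypotheses, the only $p$-adic boundary law (in the normalization $z_q\equiv 1$) with values in $\mathcal E_p^{q-1}$ is the constant law $\boldsymbol{z}(x,y)\equiv(1,\dots,1)$. Since every $p$-adic Markov chain with transition matrices $(\mathcal Q_b)$ satisfying~\eqref{condition-matrix} arises from such a boundary law via~\eqref{eq:mu-h-ex-Lambda} (this is the adaptation of \cite[Formula~(12.13)]{Ge} asserted just before Section~4), uniqueness of the boundary law forces uniqueness of the Markov chain: it must be $\mu^{\boldsymbol{1}}$.

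Next I would substitute $z_i(x,y)\equiv 1$ into the defining formula~\eqref{eq:mu-h-ex-Lambda}. The boundary product $\prod_{x\in\partial\Lambda}z_{\varsigma(x)}(x,x_\Lambda)$ collapses to $1$, leaving
\begin{equation*}
\mu^{\boldsymbol{1}}(\sigma_{\bar\varLambda}=\varsigma)=\frac{1}{Z_{\bar\varLambda}}\prod_{b\in L:\, b\cap\Lambda\ne\emptyset}Q_b(\varsigma_b),
\end{equation*}
which is exactly~\eqref{bda}. The normalizing constant $Z_{\bar\varLambda}$ is then determined by the requirement that $\mu^{\boldsymbol{1}}$ be a $p$-adic probability measure, i.e. that the masses over all $\varsigma\in\Phi^{\bar\varLambda}$ sum to $1$; writing $\sigma$ for the configuration on $\bar\varLambda$ and using that the product only involves the coordinates touched by edges meeting $\Lambda$, one gets~\eqref{dz}. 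Here one should check that $Z_{\bar\varLambda}\ne 0$ in $\Q_p$ so that the division is legitimate; this follows because each factor $Q_b(\varsigma_b)$ lies in $\mathbb Z_p$ with $|Q_b(i,j)-Q_b(i,q)|_p\le 1/p$, so a short application of~\eqref{dg} and Lemma~\ref{l2} (as in the proof of Theorem~\ref{t2}) shows the relevant partial sums have $p$-adic norm $1$, hence are nonzero.

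I expect the only genuinely delicate point to be the passage "every $p$-adic Markov chain comes from a boundary law," i.e. the converse direction of the correspondence~\eqref{eq:mu-h-ex-Lambda}. In the real case this is \cite[Theorem~(12.12)]{Ge} and relies on positivity of the kernels; in the $p$-adic setting one must instead use that the matrices $\mathcal Q_b$ are built from the stochastic matrices $\mathbb P_b$ and the reversibility relation~\eqref{cs}, so that the ratios appearing in~\eqref{e1} are well defined and the usual recursion reconstructing $\boldsymbol{z}$ from the marginals goes through verbatim, with $|\cdot|_p$-estimates replacing sign constraints. Once this identification is granted, the rest of the corollary is the bookkeeping substitution described above, and no further obstacle remains.
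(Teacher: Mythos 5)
Your central substitution is the right one: with $\boldsymbol z\equiv(1,\dots,1)$ the boundary factors in \eqref{eq:mu-h-ex-Lambda} collapse and \eqref{bda}--\eqref{dz} follow at once, which is all the paper itself does (the corollary carries no separate proof). The gap is in how you get \emph{uniqueness of the Markov chain}. You invoke a converse that the paper never states or establishes, namely that every $p$-adic Markov chain with kernels $(\mathcal Q_b)$ arises from a boundary law via \eqref{eq:mu-h-ex-Lambda}; the discussion before Section 4 gives only the forward direction (``each boundary law defines a $p$-adic Markov chain''), and the real-case converse \cite[Theorem (12.12)]{Ge} uses positivity of the kernels in an essential way, so declaring that the reconstruction ``goes through verbatim'' with $|\cdot|_p$-estimates is an assertion, not a proof. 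Worse, even if such a converse were available, Theorem \ref{t2} only proves uniqueness of boundary laws taking values in $\mathcal E_p^{q-1}$ (its proof starts from ``let $\boldsymbol z(x,y)\in\mathcal E_p^{q-1}$ be a solution''); a chain whose reconstructed boundary law landed outside $\mathcal E_p^{q-1}$ would not be excluded, and Section 5 shows that solutions of \eqref{e1} outside $\mathcal E_p$ do occur for other matrices. So the corollary has to be read, as the paper intends, as uniqueness within the class of chains $\mu^{\boldsymbol z}$ built from boundary laws valued in $\mathcal E_p^{q-1}$; for that reading the forward construction plus Theorem \ref{t2} suffices and your extra step is both unsupported and unnecessary, while for the stronger reading your argument does not close.

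A second, smaller flaw: your justification that $Z_{\bar\varLambda}\ne 0$ is incorrect as stated. The estimates via \eqref{dg} and Lemma \ref{l2} in the proof of Theorem \ref{t2} concern quantities of the form $1+\sum_j(z_j-1)Q(j,i)$, which for $\boldsymbol z\equiv 1$ are identically $1$; they say nothing about the partition function, which is a sum over configurations of products of entries of $\mathcal Q_b$, and in general $|Z_{\bar\varLambda}|_p\ne 1$. The correct observation, carried out in Section 6 of the paper, is that the stochasticity condition in \eqref{condition-matrix} lets one sum out the spins from $\partial\varLambda$ inward, giving $Z_{\bar\varLambda}=q$ exactly; hence $Z_{\bar\varLambda}\ne 0$ in $\Q_p$, but its norm is $|q|_p$, which is strictly less than $1$ whenever $p$ divides $q$, contradicting your claimed norm-$1$ computation.
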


\section{{Criterion for non-uniqueness of the $p$-adic Markov chains}}
\subsection{On a regular tree.}
Consider the Cayley tree of order $k\geq 1$. Suppose the matrix $\mathcal Q_b$ in the system of equations (\ref{e1}) satisfies the condition
\begin{equation}\label{qc}
Q_{\langle v, x \rangle}(1,i)=Q_{\langle v, x \rangle}(1,q), \ \ \forall \langle v, x \rangle, \ \ i=2,\dots,q-1.
\end{equation}
We assume further that $Q_{\langle v, x \rangle}(1,1)$ and $Q_{\langle v, x \rangle}(1,q)$ are independent on $\langle v, x \rangle$, that is
\begin{equation}\label{qq1}
\alpha\equiv Q_{\langle v, x \rangle}(1,1), \ \ \beta\equiv Q_{\langle v, x \rangle}(1,q), \forall \langle v, x \rangle\in L.
\end{equation}
\begin{thm}\label{t3} If (\ref{qc}), (\ref{qq1}) are satisfied, $\alpha, \beta$ are $p$-adic integers,  and there exists $\gamma\in \{0,1,2,\dots\}$
such that
\begin{align}\label{condition-Hensel}
 \begin{array}{lll}
 k(\beta-\alpha)+1 \equiv 0 \,(\operatorname{mod} p^{2\gamma+1}),\\[2mm]
 k\beta+{k(k-1)\over 2}(\beta^2-\alpha^2)\equiv 0 \,(\operatorname{mod} p^{\gamma}),\\[2mm]
k\beta+{k(k-1)\over 2}(\beta^2-\alpha^2)\neq 0 \,(\operatorname{mod} p^{\gamma+1}),
 \end{array}
 \end{align}
then the equation (\ref{e1}) has at least two solutions.
\end{thm}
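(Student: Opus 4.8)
The plan is to reduce equation (\ref{e1}) to a single scalar polynomial equation by exploiting the special structure (\ref{qc})--(\ref{qq1}), and then to apply Hensel's lemma (Theorem \ref{hl}) to produce a second solution $\neq 1$. First I would substitute (\ref{qc}) into (\ref{e1}): the hypothesis $Q_{\langle v,x\rangle}(1,i) = Q_{\langle v,x\rangle}(1,q) = \beta$ for $i = 2,\dots,q-1$ means that in the ratio defining $z_i(x,y)$, the $j=1$ term $(z_1(v,x)-1)Q_{\langle v,x\rangle}(1,i)$ in the numerator equals the corresponding term in the denominator for every $i \in \{2,\dots,q-1\}$. I would then look for a solution of the form $z_1(x,y) \equiv t$ (a common constant to be determined) and $z_i(x,y) \equiv 1$ for $i = 2,\dots,q-1$, translation-invariant over all edges. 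One checks that the equations for $i = 2,\dots,q-1$ are then automatically satisfied (both numerator and denominator become $1 + (t-1)\beta$ times the same factor, using $\sum_j Q(j,i) = 1$), so the whole system collapses to the single equation for $i=1$ on the Cayley tree of order $k$, where $|\partial\{x\}\setminus\{y\}| = k$:
\begin{equation*}
t = \left(\frac{1+(t-1)\alpha}{1+(t-1)\beta}\right)^{k}.
\end{equation*}

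Next I would clear denominators and rearrange this into $F(t) = 0$ where
\begin{equation*}
F(t) = t\,\bigl(1+(t-1)\beta\bigr)^{k} - \bigl(1+(t-1)\alpha\bigr)^{k}.
\end{equation*}
Since $t=1$ is already a root (corresponding to $\boldsymbol{z}\equiv 1$), I would factor $F(t) = (t-1)G(t)$ and show that $G$ has a root in $\Z_p$ via Hensel. The natural candidate approximate root is $a_0 = 1$. To connect with the stated hypotheses (\ref{condition-Hensel}), I would expand $F$ near $t=1$: writing $t = 1+s$, a Taylor/binomial expansion gives $F(1+s) = (1+s)(1+s\beta)^k - (1+s\alpha)^k = s\bigl(1 + k(\beta-\alpha)\bigr) + s^2\bigl(k\beta + \tfrac{k(k-1)}{2}(\beta^2-\alpha^2)\bigr) + O(s^3)$, so that $G(1+s) = F(1+s)/s = \bigl(1+k(\beta-\alpha)\bigr) + s\bigl(k\beta + \tfrac{k(k-1)}{2}(\beta^2-\alpha^2)\bigr) + O(s^2)$. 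Thus $G(1) = 1 + k(\beta-\alpha)$ and $G'(1) = k\beta + \tfrac{k(k-1)}{2}(\beta^2-\alpha^2)$, which are exactly the quantities appearing in (\ref{condition-Hensel}). Hypotheses (\ref{condition-Hensel}) are then precisely the three congruences $G(1)\equiv 0\ (\mathrm{mod}\ p^{2\gamma+1})$, $G'(1)\equiv 0\ (\mathrm{mod}\ p^{\gamma})$, $G'(1)\not\equiv 0\ (\mathrm{mod}\ p^{\gamma+1})$ required by Theorem \ref{hl} with $a_0 = 1$. Hence there exists $a \in \Z_p$ with $G(a) = 0$ and $a \equiv 1\ (\mathrm{mod}\ p^{\gamma+1})$; this $a$ (together with $z_i \equiv 1$ for $i \geq 2$) gives a second solution of (\ref{e1}) as soon as $a \neq 1$.

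The main obstacle, and the point that needs care, is ruling out the degenerate possibility $a = 1$, i.e. ensuring the second Hensel root is genuinely distinct from the trivial solution. The cleanest argument is: if $G(1) = 1 + k(\beta-\alpha) = 0$ then the first congruence holds for all $\gamma$ vacuously, and one should instead argue directly — but under (\ref{condition-Hensel}) with $\gamma$ finite we have $G(1) \equiv 0 \pmod{p^{2\gamma+1}}$ while $G'(1)\not\equiv 0\pmod{p^{\gamma+1}}$, so $v_p(G'(1)) = \gamma < v_p(G(1))$ (at least $2\gamma+1$, or $G(1)=0$); in either case the standard sharp form of Hensel gives a root $a$ with $v_p(a-1) = v_p(G(1)) - v_p(G'(1)) \geq 2\gamma+1 - \gamma = \gamma+1 \geq 1$, which forces $a \neq 1$ precisely when $G(1) \neq 0$, and when $G(1) = 0$ one simply takes $a = 1$... so I would add the mild standing assumption (or check in the relevant regime) that $k(\beta-\alpha) + 1 \neq 0$ in $\Q_p$, equivalently $G(1)\neq 0$, which is implied as soon as $2\gamma+1$ is taken large enough relative to $v_p(1+k(\beta-\alpha))$ — and note this is automatic unless $\alpha-\beta = 1/k$ exactly. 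A secondary routine check is that the constructed translation-invariant $\boldsymbol{z}$ does lie in the admissible range (here no $\mathcal E_p$ restriction is imposed in the non-uniqueness statement, so $a \in \Z_p$ suffices), and that $F(t)$ indeed has $p$-adic integer coefficients, which follows from $\alpha,\beta \in \Z_p$.
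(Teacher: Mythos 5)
Your proposal is correct and follows essentially the same route as the paper: the translation-invariant ansatz $\boldsymbol{z}(x,y)=(z,1,\dots,1)$ collapses (\ref{e1}) under (\ref{qc})--(\ref{qq1}) to the scalar fixed-point equation (\ref{ef}), one factors $F(z)=(z-1)G(z)$ and applies Hensel's lemma at $a_0=1$ with $G(1)=1+k(\beta-\alpha)$ and $G'(1)=k\beta+\frac{k(k-1)}{2}(\beta^2-\alpha^2)$, which is exactly the paper's argument (the paper computes $G(1)$, $G'(1)$ from an explicit expansion of $G$ rather than a Taylor expansion of $F(1+s)$, a cosmetic difference). Your concern about the degenerate case $1+k(\beta-\alpha)=0$, where the Hensel root could coincide with $z=1$, is legitimate but is not addressed in the paper either — its proof tacitly takes the produced root to be distinct from $1$ — so on this point you are, if anything, more careful than the original argument.
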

\begin{proof} We shall prove that  the equation (\ref{e1}) has two constant (translational-invariant) solution $\boldsymbol{z}(x,y)\equiv \boldsymbol{z}$, $\forall \langle x, y\rangle \in L$. The first solution is already known: $\boldsymbol{z}(x,y)\equiv (1,\dots,1)$.
We shall show that the system (\ref{e1}) has a solution of the following form
 $$\boldsymbol{z}=\{\boldsymbol{z}(x,y)=(z, 1,1, \dots, 1)\}_{\langle x, y \rangle\in L}, \ \ z\ne 1.$$
 Then from (\ref{e1}), for the Cayley tree of order $k\geq 2$, we get
 \begin{equation}\label{ef}
z=\left({1-\alpha+\alpha z\over
1-\beta+\beta z}\right)^k.
\end{equation}
 Independently on parameters, this equation has solution $z=1$. We are going
 to find conditions on $\alpha\ne\beta$ and on $k$ to have at least one solution $z\ne 1$.

The equation  (\ref{ef}) can be written as $F(z)=0$ with
$$F(z)=z(1-\beta+\beta z)^k-(1-\alpha+\alpha z)^k.$$
We are interested in the solution of $G(z)={F(z)\over z-1}=0$, where
$$G(z)=1+\sum_{j=1}^k{k\choose j}(z\beta^j-\alpha^j)(z-1)^{j-1}. $$
Since $\alpha, \beta$ are $p$-adic integers, $G(z)$ has only $p$-adic integer coefficients.
Now we shall check the other conditions of Hensel's lemma (see Theorem \ref{hl}).
Take $a_0=1$. Then we have $G(1)=1+k(\beta-\alpha)$ and
$$
G'(1)=k\beta+{k(k-1)\over 2}(\beta^2-\alpha^2).$$
Therefore  by (\ref{condition-Hensel}), the conditions of Hensel's lemma are satisfied for
 $G(z)$. Hence there exists a $p$-adic integer $a$ such that
 $G(a)=0$ and $a\equiv a_0\,(\operatorname{mod} p^{\gamma+1})$, i.e. $G(z)=0$ has a solution $z=a$.
Since $a_0=1$, we have $a\equiv 1\,(\operatorname{mod} p^{\gamma+1})$. Thus $a\in {\mathcal E}_p$.
This proves the theorem.
\end{proof}

\begin{rk} Note that if $p$ divides $k(\beta-\alpha)+1$ then $p$ does not divide $\beta-\alpha$,
therefore $|\beta-\alpha|_p=1>{1\over p}$, i.e. the condition (\ref{cq}) is not satisfied.

\end{rk}

Let us give some examples of parameters satisfying the conditions of Theorem \ref{t3}:

\begin{ex} The case $\gamma=0$:
\begin{itemize}
\item[a)] Let $k=1$. Then the equation $G(z)=0$ has a unique solution $z=a={\alpha-1\over \beta}$.  The condition (\ref{condition-Hensel}) of Theorem \ref{t3} is equivalent to
$$|\beta-\alpha+1|_p\leq {1\over p}, \ \ |\beta|_p=1.$$
This implies $|\alpha-1|_p=1$, and $|a|=1$, $|a-1|\leq {1\over p}$. Thus, the solution $z=a$, other than the solution $z=1$, is also in $\mathcal E_p$.

\item[b)] Take $k=2$, $p=3$, $\alpha=2$, $\beta=3$.
 Then $k(\beta-\alpha)+1=3\equiv 0\ (\rm{mod} \ 3)$ and
 $k\beta+{k(k-1)\over 2}(\beta^2-\alpha^2)=11\not\equiv 0\ (\rm{mod} \ 3)$.
 For these parameters the equation (\ref{ef}) has three solutions:
 $$z_0=1, \ \ z_1={7-\sqrt{13}\over 18},  \ \ z_2={7+\sqrt{13}\over 18}.$$
 Note (see Theorem \ref{tx2}) that $\sqrt{13}$ exists in $\Q_3$.  Moreover, it can be calculated\footnote{http://www.numbertheory.org/php/p-adic.html}:
 $$\sqrt{13}=1+2\cdot 3+3^2+3^5+2\cdot 3^6+\dots.$$
Then we get
$$|z_1|_3=\left|{7-\sqrt{13}\over 18}\right|_3=\left|{3^2+3^5+2\cdot 3^6+\dots\over 2\cdot 3^2}\right|_3=1,$$
$$|z_1-1|_3=\left|{-11-\sqrt{13}\over 18}\right|_3=\left|{3^3+3^5+2\cdot 3^6+\dots\over 2\cdot 3^2}\right|_3={1\over 3}.$$
Hence $z_1\in \mathcal E_3$, and $z_1$ plays the role of $a\in \mathcal E_3$ mentioned in the proof of Theorem \ref{t3}.
 On the other hand, we have $z_1z_2={1\over 9}$. Consequently $|z_1z_2|_3=9$. Since $|z_1|_3=1$, we obtain $|z_2|_3=9$.
Thus $z_2\notin \mathcal E_3$.
\end{itemize}
\end{ex}
\begin{ex}
The case $\gamma=1$: Take $k=2$, $p=3$, $\alpha=6$, $\beta=19$. Then
$$\begin{array}{lll}
 k(\beta-\alpha)+1=27 \equiv 0 \,(\operatorname{mod} 3^{3}),\\[2mm]
 k\beta+{k(k-1)\over 2}(\beta^2-\alpha^2)=363\equiv 0 \,(\operatorname{mod} 3),\\[2mm]
k\beta+{k(k-1)\over 2}(\beta^2-\alpha^2)=363\not\equiv 0 \,(\operatorname{mod} 3^2).
 \end{array}$$
 In this case the equation (\ref{ef}) has three solutions:
 $$z_0=1, \ \ z_1={359-39\sqrt{61}\over 722},  \ \ z_2={359+39\sqrt{61}\over 722}.$$
 We have $|z_1-1|_3=|{363-39\sqrt{61}\over 722}|_3\leq {1\over 3}.$ Thus $z_1\in \mathcal E_3$.
 Similarly, one can see that $z_2\in \mathcal E_3$.
\end{ex}

As a corollary of Theorem \ref{t3}, we have the following.

\begin{thm}\label{t3a} If  the conditions of Theorem \ref{t3} are satisfied then for the matrix $\mathcal Q_b$
 on the Cayley tree of order $k\geq 1$, there are at least two $p$-adic Markov chains.
\end{thm}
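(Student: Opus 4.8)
The plan is to obtain the two $p$-adic Markov chains directly from the two $p$-adic boundary laws delivered by Theorem~\ref{t3}, and then to separate the two resulting measures by evaluating a single finite-volume marginal.

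First, I would note that the hypotheses of Theorem~\ref{t3a} are literally those of Theorem~\ref{t3}, so equation~(\ref{e1}) admits (at least) the two translation-invariant solutions $\boldsymbol z^{(1)}(x,y)\equiv(1,1,\dots,1)$ and $\boldsymbol z^{(2)}(x,y)\equiv(a,1,\dots,1)$ with $a\in\mathcal E_p$ and $a\neq 1$ (note $|a|_p=1$ by (\ref{dg}), since $|a-1|_p\leq 1/p<1=|1|_p$). By the normalization $z_q(v,x)\equiv 1$ fixed before (\ref{e1}), equation~(\ref{e1}) is exactly the normalized form of the defining relation~(\ref{eq0}); as $\mathcal Q_b$ satisfies (\ref{condition-matrix}) by assumption, each $\boldsymbol z^{(s)}$ is therefore a genuine $p$-adic boundary law for $(\mathcal Q_b)_{b\in L}$ on the Cayley tree of order $k$. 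Consequently, by the construction recalled in Section~3, each $\boldsymbol z^{(s)}$ defines a $p$-adic Markov chain $\mu^{(s)}:=\mu^{\boldsymbol z^{(s)}}$ via formula~(\ref{eq:mu-h-ex-Lambda}), with $\mu^{(1)}=\mu^{\boldsymbol 1}$.

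It remains to check that $\mu^{(1)}\neq\mu^{(2)}$, i.e. that the edge-constant ambiguity $c(x,y)$ in a boundary law does not identify the two solutions. I would argue this at the smallest volume. Fix a vertex $x_0$; on the Cayley tree of order $k$ it has $k+1$ neighbours $v_1,\dots,v_{k+1}$. Take $\Lambda=\{x_0\}$, so $\bar\Lambda=\{x_0\}\cup\{v_1,\dots,v_{k+1}\}$, and compare the values of (\ref{eq:mu-h-ex-Lambda}) on two configurations $\varsigma,\varsigma'$ on $\bar\Lambda$ that agree everywhere except that $\varsigma(v_1)=1$ and $\varsigma'(v_1)=q$, choosing the common value $\varsigma(x_0)$ and the values $\varsigma(v_j)$ for $j\geq 2$ so that all the weights $Q_b(\varsigma_b)$ and $Q_b(\varsigma'_b)$ appearing in~(\ref{eq:mu-h-ex-Lambda}) are nonzero (possible since these are finitely many explicit entries of the fixed matrices $\mathcal Q_b$ and are not forced to vanish). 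In the ratio $\mu^{(s)}(\sigma_{\bar\Lambda}=\varsigma)/\mu^{(s)}(\sigma_{\bar\Lambda}=\varsigma')$ the normalizer $Z_{\bar\Lambda}(\boldsymbol z^{(s)})$ cancels, and the $\prod_b Q_b$-factors are the same for $s=1,2$ since they depend only on $\mathcal Q_b$; the only surviving difference is the boundary factor $z^{(s)}_{1}(v_1,x_0)/z^{(s)}_{q}(v_1,x_0)=z^{(s)}_1(v_1,x_0)$, which equals $1$ for $s=1$ and $a\neq 1$ for $s=2$. Hence the two ratios differ, so $\mu^{(1)}\neq\mu^{(2)}$, and there are at least two $p$-adic Markov chains for $\mathcal Q_b$ on the Cayley tree of order $k$.

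I expect the only genuinely delicate point to be this last separation step: since a boundary law is defined only up to the constants $c(x,y)$, one must make the normalization $z_q\equiv 1$ do its work and check that the witnessing configuration avoids the zero sets of the finitely many $p$-adic weights involved. The case $k=1$ (the Cayley tree being then the bi-infinite line, and Theorem~\ref{t3} giving $a=(\alpha-1)/\beta$) is covered verbatim by the same computation; everything else is a direct invocation of Theorem~\ref{t3} together with the construction of $\mu^{\boldsymbol z}$ from Section~3.
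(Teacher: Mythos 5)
Your first paragraph reproduces what the paper itself does: Theorem \ref{t3a} is stated there as an immediate corollary of Theorem \ref{t3}, the two solutions of (\ref{e1}) being read as two boundary laws which are fed into the construction (\ref{eq:mu-h-ex-Lambda}); the paper gives no separation argument whatsoever. Your attempt to actually prove $\mu^{(1)}\neq\mu^{(2)}$ is therefore aimed at the real difficulty, but as written it has a gap precisely there. The whole weight of the argument rests on the parenthetical claim that one can choose $\varsigma(x_0)$ and $\varsigma(v_j)$, $j\ge 2$, so that all weights $Q_b(\varsigma_b)$, $Q_b(\varsigma'_b)$ are nonzero; in particular you need a common index $i=\varsigma(x_0)$ with $Q_{\langle x_0,v_1\rangle}(i,1)\neq 0$ \emph{and} $Q_{\langle x_0,v_1\rangle}(i,q)\neq 0$. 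Nothing in the hypotheses of Theorem \ref{t3} guarantees this: $p$-adic probability vectors carry no positivity, so zero entries are perfectly admissible, and the Hensel-type conditions (\ref{condition-Hensel}) do not exclude them.

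The obstruction is not merely technical. Take $p=2$, $k=1$, $q=2$, $\gamma=0$, $\alpha=Q(1,1)=0$, $\beta=Q(1,2)=1$, hence $Q(2,1)=1$, $Q(2,2)=0$; this satisfies (\ref{condition-matrix}) and the conditions of Theorem \ref{t3} (it falls under the paper's first Example, case a): $|\beta-\alpha+1|_2=|2|_2=\tfrac12$, $|\beta|_2=1$), and the second boundary law is $a=(\alpha-1)/\beta=-1\in\mathcal E_2$. For this antidiagonal matrix no row has both entries nonzero, so your witnessing configuration does not exist; worse, no witness exists at any volume: on a finite segment $\bar\Lambda$ of the line the only configurations of nonzero weight are the two alternating ones, and for the law $(z_1,z_2)=(-1,1)$ their boundary factors $z_{\varsigma(v)}z_{\varsigma(w)}$ are equal to each other (both $+1$ or both $-1$), so after normalization each alternating configuration has mass $\tfrac12$, exactly as for the law $(1,1)$. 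Thus the two boundary laws produced by Theorem \ref{t3} can induce the \emph{same} Markov chain, and your separation step cannot be completed in the stated generality; it becomes correct only under an additional nondegeneracy hypothesis on $\mathcal Q_b$ (e.g. some $i$ with $Q(i,1)\,Q(i,q)\neq 0$, together with nonvanishing of the remaining weights you use), at which point your ratio computation does the job. In short: the construction of the two candidate chains matches the paper; the distinctness claim, which the paper silently assumes, is not established by your argument as it stands.
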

\begin{rk} Theorem \ref{t3} can be generalized as follows: fix $m\in \{1,2,\dots,q-1\}$ and assume
 \begin{equation}\label{qcm}
Q_{\langle v, x \rangle}(j,i)=Q_{\langle v, x \rangle}(j,q), \ \ \forall \langle v, x \rangle, \ \ j=1,\dots,m; \ \ i=m+1,\dots,q-1.
\end{equation}
Suppose  $\sum_{j=1}^mQ_{\langle v, x \rangle}(j,i)$ and $\sum_{j=1}^mQ_{\langle v, x \rangle}(j,q)$ are independent on $\langle v, x \rangle$, i.e.,
\begin{equation}\label{qq}
A\equiv \sum_{j=1}^mQ_{\langle v, x \rangle}(j,i), \ \ B\equiv \sum_{j=1}^mQ_{\langle v, x \rangle}(j,q), \ \ \forall \langle v, x \rangle\in L, \ \ i=1,\dots, m.
\end{equation}
Under the above mentioned conditions one can show that the system (\ref{e1}) has a solution of the following form
 $$\boldsymbol{z}=\{\boldsymbol{z}(x,y)=(\underbrace{z,z,\dots,z}_m, 1,1, \dots, 1)\}_{\langle x, y \rangle\in L}, \ \ z\ne 1.$$
 Then from (\ref{e1}), for the Cayley tree of order $k\geq 2$, we get
 \begin{equation}\label{efa}
z=\left({1-A+A z\over
1-B+B z}\right)^k.
\end{equation}
This equation is identical with (\ref{ef}) and it has non-unique solutions when $A$ and $B$ (replacing $\alpha$ and $\beta$) satisy the conditions mentioned in Theorem \ref{t3}.
\end{rk}

\subsection{Extension on a non-regular tree.}
Consider now a general tree $\mathcal T$, with each vertex having at least \emph{two} nearest neighbors.
Recall that $L$ is the set of all edges of $\mathcal T$.
Such a tree contains a Cayley tree (of some order $k\geq 1$) as a subtree, which we denote by $\Gamma^k$.
Let $L_k$ be the set of all edges of $\Gamma^k$, i.e., $L_k\subset L$.

Assume on $\Gamma^k$, the conditions of Theorem \ref{t3} are satisfied. Then we have a boundary law of the form
\begin{equation}\label{ct}
\boldsymbol{z}=\{\boldsymbol{z}(x,y)=(z, 1,1, \dots, 1)\}_{\langle x, y \rangle\in L_k}, \ \ z\ne 1.
\end{equation}
Let $g(z)={1-\alpha+\alpha z\over 1-\beta+\beta z}.$ Define on the edges $\langle x, y\rangle$ of the general tree $\mathcal T$
the following vector-valued function
\begin{equation}\label{gt}
\boldsymbol{l}=\{\boldsymbol{l}(x,y)=(l_1(x,y), 1, \dots, 1)\},
\end{equation}
where
\begin{equation}\label{gtt}
l_1(x,y)=\left\{\begin{array}{lll}
z, \ \ \ \ \ \ \ \ \mbox{if} \ \  \langle x, y\rangle\in L_k,\\[3mm]
1, \ \ \ \ \ \ \ \ \mbox{if} \ \  \langle x, y\rangle\in L, \ \ x\in L\setminus L_k,\\[3mm]
zg(z), \ \ \mbox{if} \ \  \langle x, y\rangle\in L, \ \ x\in\Gamma^k, \ \ y\in L\setminus L_k,
\end{array}\right.
\end{equation}
and $z$ is defined in (\ref{ct}).

For $i=2,\dots,q-1$, we assume
\begin{equation}\label{qcc}
Q_{\langle v, x \rangle}(1,i)=Q_{\langle v, x \rangle}(1,q), \ \ \mbox{for} \ \ \langle v, x \rangle \ \ \mbox{with} \ \
 v\in\Gamma^k, \ \ x\in L\setminus L_k.
\end{equation}
and show that $\boldsymbol{l}$ defined by (\ref{gt}) satisfies the equation (\ref{e1}).

For coordinates $l_i(x,y)=1$, $i=2,3,\dots,q-1$, from (\ref{e1}) we have
\begin{equation}\label{ee1}
1=l_i(x,y)=\prod_{v\in\partial\{x\}\setminus\{y\}}{1+(l_1(v,x)-1)Q_{\langle v, x \rangle}(1,i)\over
1+(l_1(v,x)-1)Q_{\langle v, x \rangle}(1,q)}, \ \ i=2,\dots, q-1.
\end{equation}
Therefore, by (\ref{qc}), (\ref{gtt}) and (\ref{qcc}), one can see that the right-hand side of (\ref{ee1}) is always 1.

Now we show that $l_1(x,y)$ also satisfies (\ref{e1}).
Indeed, we note that
$\partial\{x\}\setminus\{y\}=A_k(x,y)\cup B_k(x,y)$, where $A_k(x,y)=(\partial\{x\}\setminus\{y\})
\cap L_k$ and $B_k(x,y)=(\partial\{x\}\setminus\{y\})\cap (L\setminus L_k)$.

 We thus have the following three possible cases:

{\it Case: $x,y\in \Gamma^k$}. In this case $l_1(x,y)=z$ and $A_k(x,y)$ has $k$ elements. Therefore,
the equation (\ref{e1}) for $l_1(x,y)$ is reduced to $z=(g(z))^k$, which is satisfied by the conditions of Theorem \ref{t3}.

{\it Case: $\langle x, y\rangle\in L, \ \ x\in L\setminus L_k$}.
Then $A_k(x,y)=\emptyset$ and hence the equation (\ref{e1}) for $l_1(x,y)$ is reduced to the identity $1=1$.

{\it Case: $\langle x, y\rangle\in L, \ \ x\in\Gamma^k, \ \ y\in L\setminus L_k$.} In this case $A_k(x,y)$ contains
$k+1$ elements, and we have $l_1(v,x)=z$ for all $v\in A_k(x,y)$. Thus the equation (\ref{e1}) has the form
$l_1(x,y)=(g(z))^{k+1}$. Using $z=(g(z))^k$, we get $l_1(x,y)=zg(z)$ as in the definition (\ref{gt}).
Thus $\boldsymbol{l}(x,y)$ satisfies the equation (\ref{e1}).

Denote by $\mu^{\boldsymbol{l}}$ the  $p$-adic Markov chain corresponding to $\boldsymbol{l}$ given by (\ref{gt}).

We have proved the following theorem.
\begin{thm} Let $\mathcal T$ be a tree  containing a Cayley tree $\Gamma^k$ of order $k\geq 1$, as a subtree. Suppose the conditions of Theorem \ref{t3} are satisfied on $\Gamma^k$. If (\ref{qcc}) is satisfied,
then on the tree $\mathcal T$ there are at least two $p$-adic Markov chains (one is $\mu^{\boldsymbol{l}}$ and the other is  $\mu^{\boldsymbol{1}}$).
\end{thm}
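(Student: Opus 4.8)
The plan is to verify that the vector-valued function $\boldsymbol{l}$ defined by (\ref{gt})--(\ref{gtt}) is genuinely a $p$-adic boundary law, i.e. that it solves the system (\ref{e1}) on every edge $\langle x,y\rangle\in L$ of the general tree $\mathcal T$, and then to invoke the general construction (\ref{eq:mu-h-ex-Lambda}) to produce the associated $p$-adic Markov chain $\mu^{\boldsymbol l}$; since $\boldsymbol l$ is not identically $(1,\dots,1)$ (the coordinate $l_1$ takes the value $z\ne 1$ on $L_k$), the measure $\mu^{\boldsymbol l}$ will differ from $\mu^{\boldsymbol 1}$, giving the desired non-uniqueness. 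First I would separate the components: for $i=2,\dots,q-1$ the equation (\ref{e1}) for $l_i(x,y)$ reduces to (\ref{ee1}), and here each factor on the right equals $1$ because, by (\ref{qc}) when $\langle v,x\rangle\in L_k$ and by (\ref{qcc}) when $v\in\Gamma^k$, $x\in L\setminus L_k$, one has $Q_{\langle v,x\rangle}(1,i)=Q_{\langle v,x\rangle}(1,q)$, so numerator and denominator coincide; and when $l_1(v,x)=1$ the factor is trivially $1$. Hence the components $i\ge 2$ are handled uniformly.

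The core of the argument is the component $i=1$. Here I would split $\partial\{x\}\setminus\{y\}=A_k(x,y)\cup B_k(x,y)$ as in the excerpt, note that every $v\in B_k(x,y)$ contributes a factor $1$ to the product (since $l_1(v,x)=1$ for such $v$ by the second line of (\ref{gtt})), and reduce (\ref{e1}) to a product of $g(z)$-factors over $v\in A_k(x,y)$, where $g(z)=\frac{1-\alpha+\alpha z}{1-\beta+\beta z}$. Then I would run through the three cases exactly as displayed: (a) $x,y\in\Gamma^k$, where $|A_k(x,y)|=k$ and the equation becomes $z=g(z)^k$, which holds by (\ref{ef}) and Theorem \ref{t3}; (b) $x\in L\setminus L_k$, where $A_k(x,y)=\emptyset$ and the equation is $1=1$; (c) $x\in\Gamma^k$, $y\in L\setminus L_k$, where $|A_k(x,y)|=k+1$ so the equation reads $l_1(x,y)=g(z)^{k+1}=z\,g(z)$ using $z=g(z)^k$, matching the third line of (\ref{gtt}). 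This confirms $\boldsymbol l$ satisfies (\ref{e1}) on all edges.

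To conclude, I would appeal to the fact (recorded after Definition 5, via (\ref{eq:mu-h-ex-Lambda})) that any $p$-adic boundary law satisfying (\ref{condition-matrix}) defines a $p$-adic Markov chain; applying this to $\boldsymbol l$ yields $\mu^{\boldsymbol l}$, and applying it to the constant law $\boldsymbol 1$ yields $\mu^{\boldsymbol 1}$. Since boundary laws that are not proportional edge-by-edge give rise to distinct Markov chains (the marginals on a single edge already differ, because $z_1\equiv z$ versus $z_1\equiv 1$ with $z\ne 1$), the two measures are distinct, proving the theorem. The step I expect to require the most care is the bookkeeping of the neighbor sets in case (c): one must be certain that a vertex $x$ lying on $\Gamma^k$ but with an off-$\Gamma^k$ neighbor $y$ still has exactly $k+1$ of its other neighbors inside $\Gamma^k$ — this uses that $\Gamma^k$ is a Cayley tree of order $k$ sitting inside $\mathcal T$ and that $x$ is an interior vertex of $\Gamma^k$ — together with checking that the normalization $z_q\equiv 1$ is consistent with the off-tree edges, i.e. that nothing in (\ref{qcc}) or the definition of $\boldsymbol l$ forces an inconsistency at vertices of degree larger than $k+1$ in $\mathcal T$.
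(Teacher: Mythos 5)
Your proposal is correct and follows essentially the same route as the paper's own argument: the same splitting of $\partial\{x\}\setminus\{y\}$ into $A_k(x,y)\cup B_k(x,y)$, the same treatment of the coordinates $i\ge 2$ via (\ref{qc}) and (\ref{qcc}), the same three-case verification that $\boldsymbol{l}$ defined by (\ref{gt})--(\ref{gtt}) solves (\ref{e1}) (with $z=g(z)^k$ giving $l_1=zg(z)$ in the mixed case), and the same appeal to the boundary-law construction (\ref{eq:mu-h-ex-Lambda}) to obtain $\mu^{\boldsymbol{l}}$ alongside $\mu^{\boldsymbol{1}}$. Your only addition is the explicit remark that $z\ne 1$ makes the two chains differ already on a single-edge marginal, a point the paper leaves implicit.
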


\section{{Criterion for the (un-)boundedness of the $p$-adic Markov chains}}

Now we are interested in finding out whether a $p$-adic Markov chain is bounded.

Let $\{\boldsymbol{z}(x,y)\in \mathcal E_p, \ \ \langle x, y \rangle\in L\}$  be a boundary law for the matrix $\mathcal Q_b=(Q_b(i,j))$
and $\mu^{\boldsymbol{z}}$
be the corresponding $p$-adic Markov chain.

\begin{thm} The following hold
\begin{itemize}
\item[1)] if  $\max_{i,j\in \Phi}|Q_b(i,j)|_p\leq |q|_p$ for all $b\in L$,
then the $p$-adic Markov chain $\mu^{\boldsymbol{z}}$ is bounded;
\item[2)]
if $\min_{i}\max_{j}|Q_b(i,j)|_p>|q|_p$ for all $b\in L$,
then the $p$-adic Markov chain $\mu^{\boldsymbol{z}}$ is not bounded.

\end{itemize}
\end{thm}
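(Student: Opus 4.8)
The plan is to control the $p$-adic norm of the measure $\mu^{\boldsymbol{z}}(\sigma_{\bar{\varLambda}}=\varsigma)$ from the explicit formula \eqref{eq:mu-h-ex-Lambda}, splitting the analysis into the numerator $\prod_{x\in\partial\Lambda}z_{\varsigma(x)}(x,x_\Lambda)\prod_{b\cap\Lambda\neq\emptyset}Q_b(\varsigma_b)$ and the normalising factor $Z_{\bar\varLambda}=Z_{\bar\varLambda}(\boldsymbol{z})$. Since every $z_i(x,y)\in\mathcal E_p$, Lemma \ref{l1} and the strong triangle inequality give $|z_{\varsigma(x)}(x,x_\Lambda)|_p=1$, so the boundary-law factors never affect the norm; thus $|\mu^{\boldsymbol{z}}(\sigma_{\bar\varLambda}=\varsigma)|_p=|Z_{\bar\varLambda}|_p^{-1}\prod_{b\cap\Lambda\neq\emptyset}|Q_b(\varsigma_b)|_p$. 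The whole problem therefore reduces to estimating $|Z_{\bar\varLambda}|_p$ from below (for part~1) and from above (for part~2), uniformly in $\varLambda$, and comparing it to the weight of a single configuration.

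For part~1, assume $\max_{i,j}|Q_b(i,j)|_p\le|q|_p$ for all $b$. First I would observe that, because each column of $\mathcal Q_b$ sums to $1$ (the second line of \eqref{condition-matrix}), we have $\big|\sum_{j=1}^q Q_b(j,i)\big|_p=1$, which combined with the strong triangle inequality forces $\max_j|Q_b(j,i)|_p\ge 1/|q|_p$... more carefully: it forces that the maximal column entry cannot be too small, giving a matching lower bound on the product of $Q$-weights for a suitably chosen configuration. Summing \eqref{dz} over $\Omega_\Lambda$, the strong triangle inequality gives $|Z_{\bar\varLambda}|_p\le\max_{\sigma}\prod_{b\cap\Lambda\neq\emptyset}|Q_b(\sigma_b)|_p$; more importantly I want a lower bound, which I get by picking, edge by edge starting from a root and proceeding along $^zL$, a value $\varsigma_b$ realising the column maximum, so that the resulting term dominates all others in norm and hence, by \eqref{dg}, equals $|Z_{\bar\varLambda}|_p$. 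Then for an arbitrary $\varsigma$ one has $\prod_b|Q_b(\varsigma_b)|_p\le\prod_b|q|_p^{?}$ while $|Z_{\bar\varLambda}|_p$ is the product of the column maxima; the ratio is a product over edges of $|Q_b(\varsigma_b)|_p/(\text{column max})$, each factor $\le|q|_p/(1/|q|_p)$ is bounded, but one needs the right comparison so that the exponents of $|q|_p$ cancel telescopically. The upshot is $|\mu^{\boldsymbol{z}}(\sigma_{\bar\varLambda}=\varsigma)|_p\le 1$ uniformly, i.e. boundedness. For part~2, under $\min_i\max_j|Q_b(i,j)|_p>|q|_p$, I would exhibit along a growing sequence of sets $\varLambda_n$ a configuration whose $Q$-weight has norm much larger than $|Z_{\bar{\varLambda}_n}|_p$: choosing at each vertex the spin $i$ minimising $\max_j|Q_b(i,j)|_p$ at the parent edge forces the dominant term in $Z$, while a configuration hitting the column maxima has strictly larger norm by a factor growing geometrically with $|\varLambda_n|$, so $|\mu^{\boldsymbol{z}}(\cdot)|_p\to\infty$.

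The main obstacle I expect is the combinatorial bookkeeping of the exponents of $|q|_p$ across the edges of $\bar\varLambda$: the weight of a configuration and the normaliser $Z_{\bar\varLambda}$ are both products over the same edge set, and boundedness hinges on these exponents cancelling exactly (or the surplus being absorbed by $|q|_p\le1$), rather than accumulating with $|\varLambda|$. Making this cancellation precise requires carefully identifying, for each edge $b=\langle u,v\rangle$ with $u$ the parent, that the relevant quantity is $|Q_b(\varsigma(u),\varsigma(v))|_p$ divided by $\max_j|Q_b(\varsigma(u),j)|_p$, and then invoking the column-sum constraint $\sum_j Q_b(j,\varsigma(v))=1$ together with \eqref{dg} to pin down $|Z_{\bar\varLambda}|_p$ exactly as a product of per-edge maxima. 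Once that identification is in place, both parts follow by comparing, edge by edge, $|Q_b(\varsigma_b)|_p$ with the per-edge maximum and using the hypothesis on $\max_{i,j}|Q_b(i,j)|_p$ versus $|q|_p$.
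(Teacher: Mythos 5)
Your reduction of both parts to estimating $|Z_{\bar{\varLambda}}|_p$ is reasonable, but the step you rely on to evaluate the normalizer --- choosing one configuration that realises the per-edge column maxima and invoking \eqref{dg} to conclude that this single term dominates, so that $|Z_{\bar{\varLambda}}|_p$ equals the product of per-edge maxima --- is a genuine gap, and is in fact false in general. The rule \eqref{dg} applies only when one summand has strictly larger norm than all the others; here the maximal norm is typically attained by many configurations, and the column-sum constraint in \eqref{condition-matrix} produces exact cancellations among them. The paper's proof hinges precisely on this cancellation: summing the stochasticity condition edge by edge from $\partial\varLambda$ inwards gives the exact identity $\sum_{\varphi}\prod_{b\cap\varLambda\neq\emptyset}Q_b(\varphi_b)=q$ (this is the computation behind \eqref{dz}), and then, writing $\prod_{x}z_{\varphi(x)}=1+\bigl(\prod_{x}z_{\varphi(x)}-1\bigr)$ with the correction of norm at most $\tfrac1p$ by Lemma \ref{l2}, one gets under hypothesis 1) that $|Z_{\bar{\varLambda}}|_p=|q|_p$ by \eqref{dg} --- not a product of column maxima. (Under 1) the two numbers happen to coincide, because the column sums force $\max_j|Q_b(j,i)|_p\ge 1$, hence $|q|_p=1$; note also that your inequality $\max_j|Q_b(j,i)|_p\ge 1/|q|_p$ should read $\ge 1$.) As written, your argument supplies no valid lower bound on $|Z_{\bar{\varLambda}}|_p$ uniform in $\varLambda$, which is exactly what part 1) requires.

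The same misconception is fatal for part 2), and there it is not just a technical issue: if a single configuration hitting the column maxima really dominated the sum defining $Z_{\bar{\varLambda}}$, then $|Z_{\bar{\varLambda}}|_p$ would be as large as the largest numerator and the chain would be bounded --- the opposite of what you want. The true mechanism of unboundedness is that the normalizer collapses: the $Q$-part of the sum equals exactly $q$, and the boundary-law corrections contribute at most $\tfrac1p$ times the largest term, so the relevant denominators have norm at most $\max\bigl\{|q|_p,\ \tfrac1p\max_j|Q_b(i,j)|_p\bigr\}$, which is strictly smaller than $\max_j|Q_b(i,j)|_p$ when $\min_i\max_j|Q_b(i,j)|_p>|q|_p$. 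The paper exploits this through the two-site (edge) marginals along an infinite path $x_0,x_1,\dots$: choosing the spins greedily, with $i_{m+1}$ maximising $|Q_b(i_m,j)|_p$ over $j$, each transition factor has norm at least $p$, so the marginal of the chosen cylinder has norm at least $p^n$, whence unboundedness. To repair your proposal you would need to replace the ``dominant term in $Z$'' step by this exact evaluation of the normalizing sums (or an equivalent cancellation argument) in both parts.
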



\begin{proof} It suffices to show that
for any finite \emph{connected} set
$\varnothing\ne\varLambda\subset V$ (denote
$\bar{\varLambda}=\varLambda\cup\partial\varLambda$), and any $\varsigma\in \Omega_\Lambda$, one has $|\mu^{\boldsymbol{z}}(\sigma_{\myn\bar{\varLambda}}\myn=\varsigma)|_p\leq M$, for some $M>0$.
Using (\ref{eq:mu-h-ex-Lambda}), we get
\begin{equation}\label{ea}
\left|\mu^{\boldsymbol{z}}(\sigma_{\myn\bar{\varLambda}}\myn=\varsigma)\right|_p=\left|{\prod_{x\in \partial \Lambda}
{z}_{\varsigma(x)}(x,x_\Lambda)\prod_{{b\in L:\atop b\cap \Lambda\ne \emptyset}}Q_b(\varsigma_b)\over
\sum_{\varphi_{\myn\bar{\varLambda}}}\prod_{x\in \partial \Lambda}
{z}_{\varphi(x)}(x,x_\Lambda)\prod_{{b\in L:\atop b\cap \Lambda\ne \emptyset}}Q_b(\varphi_b)}\right|_p.
\end{equation}
Let us calculate
\begin{align*}\mathcal Z&=\left|\sum_{\varphi_{\myn\bar{\varLambda}}}\prod_{x\in \partial \Lambda}
{z}_{\varphi(x)}(x,x_\Lambda)\prod_{{b\in L:\atop b\cap \Lambda\ne \emptyset}}Q_b(\varphi_b)\right|_p\\
&=\left|\sum_{\varphi_{\myn\bar{\varLambda}}}\left[\prod_{x\in \partial \Lambda}
{z}_{\varphi(x)}(x,x_\Lambda)-1\right]\prod_{{b\in L:\atop b\cap \Lambda\ne \emptyset}}Q_b(\varphi_b)+\sum_{\varphi_{\myn\bar{\varLambda}}}\prod_{{b\in L:\atop b\cap \Lambda\ne \emptyset}}Q_b(\varphi_b) \right|_p.
\end{align*}
The set $\bar{\varLambda}$ can be decomposed as
$$\bar{\varLambda}=\partial \Lambda\cup \partial_{int}\Lambda\cup \partial_{int}(\Lambda\setminus \partial_{int}\Lambda)\cup\dots\cup\{x_0\},$$
where $\partial_{int}A=\{x\in A: \exists y\in V\setminus A, \ \langle x, y \rangle\}$.
Since $\mathcal Q_b$ is stochastic for any $b\in L$ we get
\begin{align*}
&\sum_{\varphi_{\myn\bar{\varLambda}}}\prod_{{b\in L:\atop b\cap \Lambda\ne \emptyset}}Q_b(\varphi_b)=
\sum_{\varphi_{\myn{\varLambda}}}\prod_{{b\in L:\atop b\subset \Lambda\times \Lambda}}Q_b(\varphi_b)
\prod_{{b=\langle x, y\rangle\in L:\atop x\in \partial_{int}\Lambda, \, y\in \partial\Lambda}}\sum_{\varphi(y)}Q_b(\varphi(x),\varphi(y))\\
=&\sum_{\varphi_{\Lambda\setminus \partial_{int}\Lambda}}\prod_{{b\in L:\atop b\subset (\Lambda\setminus \partial_{int}\Lambda)\times (\Lambda\setminus \partial_{int}\Lambda)}}Q_b(\varphi_b)
\prod_{{b=\langle x, y\rangle\in L:\atop  x\in \partial_{int}(\Lambda\setminus \partial_{int}\Lambda),\, y\in \partial_{int}\Lambda}}\sum_{\varphi(y)}Q_b(\varphi(x),\varphi(y))\\
=&\dots=\sum_{\varphi(x_0)=1}^q1=q.
\end{align*}

 1) Under the conditions of the part 1), we have (note that $|q|_p\leq 1$)
$$\mathcal Z=\left|\sum_{\varphi_{\myn\bar{\varLambda}}}\left[\prod_{x\in \partial \Lambda}
{z}_{\varphi(x)}(x,x_\Lambda)-1\right]\prod_{{b\in L:\atop b\cap \Lambda\ne \emptyset}}Q_b(\varphi_b)+q\right|_p=|q|_p.
$$
Thus
 \begin{equation}\label{ea1}
\left|\mu^{\boldsymbol{z}}(\sigma_{\myn\bar{\Lambda}}\myn=\varsigma)\right|_p=\mathcal Z^{-1}\left|\prod_{x\in \partial \Lambda}
{z}_{\varsigma(x)}(x,x_\Lambda)\prod_{{b\in L:\atop b\cap \Lambda\ne \emptyset}}Q_b(\varsigma_b)\right|_p\leq {|q|_p^{|\bar{\Lambda}|}\over |q|_p}\leq 1.
\end{equation}

2) Suppose now the conditions of part 2) are satisfied. For a marginal on the two-site volume, i.e.,
an edge $b=\langle x, y\rangle$, corresponding to a boundary law $\boldsymbol{z}=(z_1(x,y),\dots,z_{q}(x,y))$,
when $\sigma(x)=i$ is fixed we have
$$\mu_b^{\boldsymbol{z}}(i,\sigma(y))={Q_b(i, \sigma(y))z_{\sigma(y)}(x,y)\over \sum_{\varphi(y)=1}^qQ_b(i, \varphi(y))z_{\varphi(y)}(x,y)}.$$
Therefore,
\begin{align}\label{de}
\begin{split}
\left|\mu_b^{\boldsymbol{z}}(i,\sigma(y))\right|_p=& \left|{Q_b(i, \sigma(y))z_{\sigma(y)}(x,y)\over \sum_{\varphi(y)=1}^q[z_{\varphi(y)}(x,y)-1]Q_b(i, \varphi(y))+\sum_{\varphi(y)=1}^qQ_b(i,\varphi(y))}\right|_p\\
=& {|Q_b(i, \sigma(y))|_p\over \left|\sum_{\varphi(y)=1}^q[z_{\varphi(y)}(x,y)-1]Q_b(i, \varphi(y))+q\right|_p}.
\end{split}
\end{align}

In order to show that the measure $\mu^{\boldsymbol{z}}$ is not bounded, it is enough to show that
its marginal measure is not bounded. Let $\pi=\{x_0, x_1, . . . \}$ be an arbitrary
infinite path in the tree. The marginal measure $\mu^{\boldsymbol{z}}_\pi$ has the form
\begin{equation}\label{m}
\mu^{\boldsymbol{z}}_\pi(\omega_n)=\alpha_{\omega_n(x_{0})}\prod^{n-1}_{m=0}
\mu_{\langle x_m, x_{m+1}\rangle}^{\boldsymbol{z}}(\omega_n(x_m),\omega_n(x_{m+1})).
\end{equation}
Here $\omega_n : \{x_0, ..., x_n\}\to \Phi=\{1,2,\dots,q\}$ is a configuration on
$\{x_0, . . . , x_n\}$ and $\alpha_i$ is a coordinate of the invariant stochastic vector of the
matrix $\left(\mu_{\langle x_0, x_1\rangle}^{\boldsymbol{z}}(i, j)\right)_{i,j=1,\dots,q}$.

To ensure that $|\alpha_{\omega_n(x_0)}|_p>c$ for some $c>0$.
 We can choose the value $i_0=\omega_n(x_{0})$ (of the configuration $\omega_n$ on the vertex $x_{0}$) such that
$$|\alpha_{i_0}|_p=\max_{s\in\Phi} |\alpha_{s}|_p.$$
Then since $\alpha$ is a probability vector we have
$$1=\left|\sum_{s=1}^q \alpha_{s}\right|_p \leq   \max_{s\in\Phi} |\alpha_{s}|_p=|\alpha_{\omega_n(x_{0})}|_p.$$
Having $i_0$, we choose the value $i_1=\omega_n(x_1)$ of the configuration $\omega_n$ to satisfy
$$|Q_b(i_0,i_1)|_p=\max_{j}|Q_b(i_0,j)|_p.$$
By iterating, we define $i_m=\omega(x_m)$  to have
$$|Q_b(i_{m-1}, i_m)|_p=\max_{j}|Q_b(i_{m-1},j)|_p, \ \ m\geq 1.$$
 Then for the above constructed $\omega_n$, by (\ref{de}) we get
\begin{align}\label{ll}
\begin{split}
\left|\mu_{\langle x_m, x_{m+1}\rangle}^{\boldsymbol{z}}(i_m, i_{m+1})\right|_p=&
{\left|Q_b(i_m, i_{m+1})\right|_p\over  \left|\sum_{j=1}^q[z_{j}(x,y)-1]Q_b(i_m, j)+q\right|_p}\\
\geq& {\left|Q_b(i_m, i_{m+1})\right|_p\over \max \left\{{1\over p}\max_{j}|Q_b(i_m, j)|_p, |q|_p\right\}}\\
=&{|Q_b(i_m, i_{m+1})|_p
\over \max\left\{{1\over p}|Q_b(i_m, i_{m+1})|_p, \ \ |q|_p\right\}}\geq p, \ \ m=1,2,\dots.
\end{split}
\end{align}
Here, at the last step we have used the following (which is true by the condition of the part 2) of theorem)
$$|q|_p<|Q_b(i_m, i_{m+1})|_p.$$
Consequently, for such a configuration $\omega_n$, from (\ref{m}) and  (\ref{ll}), we find that
$$\mu^{\boldsymbol{z}}_\pi(\omega_n)\geq p^n,$$
i.e., $\mu^{\boldsymbol{z}}$ is not bounded.
\end{proof}

\section*{ Acknowledgements}

UAR thanks the University Paris-Est Cr\'eteil (UPEC) for the hospitality during June 2019, where this work has been achieved, and Labex B\'ezout (Universit\'e Paris Est) for the financial and logistic support of this visit. The collaboration of the authors is realized within the project "Real/ $p$-adic dynamical systems and Gibbs measures" funded by LabEx B\'ezout (ANR-10-LABX-58).

\end{document}